\renewcommand{\@biblabel}[1]{\quad#1.}
\date{}
\theoremstyle{plain}
\newtheorem{result}{Result}
\newtheorem{assum}{Assumption}
\newtheorem{theorem}{Theorem}
\newcommand{\M}{\mathrm{M}}
\newcommand{\R}{\mathrm{R}}
\newcommand{\vs}{\mathbf{s}}
\newcommand{\va}{\mathbf{a}}
\newcommand{\vm}{\mathbf{m}}
\newcommand{\cM}{\mathcal{M}}
\newcommand{\cA}{\mathcal{A}}
\DeclareMathOperator{\E}{E}
\DeclareMathOperator{\Prob}{P}
\begin{document}

\begin{flushleft}
{\Large
\textbf{The molecular clock of neutral evolution can be accelerated or slowed by asymmetric spatial structure}
}
\\
Benjamin Allen$^{1,2,3,\ast}$, 
Christine Sample$^{1}$, 
Yulia A.~Dementieva$^{1}$,
Ruben C.~Medeiros$^{1}$,
Christopher Paoletti$^{1}$,
Martin A.~Nowak$^{2,4}$
\\
\bf{1} Department of Mathematics, Emmanuel College, Boston, MA, USA
\\
\bf{2} Program for Evolutionary Dynamics, Harvard University, Cambridge, MA, USA
\\
\bf{3} Center for Mathematical Sciences and Applications, Harvard University, Cambridge, MA, USA
\\
\bf{4} Department of Mathematics, Department of Organismic and Evolutionary Biology, Harvard University, Cambridge, MA, USA\\
$\ast$ E-mail: benjcallen@gmail.com
\end{flushleft}

\section*{Abstract}
Over time, a population acquires neutral genetic substitutions as a consequence of random drift. A famous result in population genetics asserts that the rate, $K$, at which these substitutions accumulate in the population coincides with the mutation rate, $u$, at which they arise in individuals: $K=u$.  This identity enables genetic sequence data to be used as a ``molecular clock" to estimate the timing of evolutionary events.  While the molecular clock is known to be perturbed by selection, it is thought that $K=u$ holds very generally for neutral evolution. Here we show that asymmetric spatial population structure can alter the molecular clock rate for neutral mutations, leading to either $K<u$ or $K>u$.  Deviations from $K=u$ occur because mutations arise unequally at different sites and have different probabilities of fixation depending on where they arise.  If birth rates are uniform across sites, then $K \leq u$. In general, $K$ can take any value between 0 and $Nu$. Our model can be applied to a variety of population structures. In one example, we investigate the accumulation of genetic mutations in the small intestine. In another application, we analyze over 900 Twitter networks to study the effect of network topology on the fixation of neutral innovations in social evolution.

\section*{Author Summary}
Evolution is driven by genetic mutations.  While some mutations affect an organism's ability to survive and reproduce, most are neutral and have no effect.  Neutral mutations play an important role in the study of evolution because they generally accrue at a consistent rate over time.  This result, first discovered almost 50 years ago, allows neutral mutations to be used as a ``molecular clock" to estimate, for example, how long ago humans diverged from chimpanzees and bonobos.  We used mathematical modeling to study how the rates of these molecular clocks are affected by the spatial arrangement of a population in its habitat.  We find that asymmetry in this spatial structure can either slow down or speed up the rate at which neutral mutations accrue.  This effect could potentially skew our estimates of past events from genetic data.  It also has implications for a number of other fields.  For example, we show that the architecture of intestinal tissue can limit the rate of genetic substitutions leading to cancer.  We also show that the structure of social networks affects the rate at which new ideas replace old ones.  Surprisingly, we find that most Twitter networks slow down the rate of idea replacement.

\section*{Introduction}

A half-century ago, Zuckerkandl and Pauling \cite{Zuckerkandl1965Divergence} discovered that amino acid substitutions often occur with sufficient regularity as to constitute a ``molecular clock".  Theoretical support for this observation was provided by Kimura \cite{Kimura1968Rate}, who argued that observed rates of amino acid substitution could only be explained if the majority of substitutions are selectively neutral. Under simple models of evolution, a neutral mutation has probability $1/N$ of becoming fixed in a haploid population of size $N$. It follows that the rate $K$ of neutral substitution per generation---given by the product of the population size $N$, the mutation probability $u$ per reproduction, and the fixation probability $\rho$---is simply equal to $u$.  (A similar cancellation occurs in diploids, leading again to $K=u$.) In other words, for any neutral genetic marker, the rate of substitution at the population level equals the rate of mutation at the individual level \cite{Kimura1968Rate}.  

A number of factors can alter the rate of neutral substitution \cite{Ayala1999Molecular,bromham2003modern}, including selection, changes in population demography over time, or mutation rates that vary systematically by demographic classes \cite{pollak1982rate,balloux2012substitution} or sex \cite{lehmann2014stochastic}. The extent to which these factors compromise the applicability of the molecular clock hypothesis to biological sequence data has been intensely debated \cite{Ohta1971Constancy,Kimura1974Principles,Langley1974Examination,Kimura1984Neutral,Ayala1986Virtues,Li1987Evaluation,Ayala1999Molecular,bromham2003modern,kumar2005molecular}.  

However, it is generally thought that spatial structure alone (without spatial variation in the mutation rate\cite{lehmann2014stochastic}) cannot alter the rate of neutral substitution.  This consensus is based on analyses \cite{Maruyama1970Fixation,Lande1979Effective,Slatkin1981Fixation,Nagylaki1982Invariance,tachida1991fixation,Barton1993Probability,roze2003selection,Whitlock2003Fixation,Lieberman2005Graphs,Broom2008Analysis,Patwa2008Fixation,Masuda2009Evolutionary,Houchmandzadeh2011Fixation,shakarian2012review,voorhees2013birth,constable2014population,monk2014martingales} of a variety of models of spatially structured populations.  Each of these analyses found that the fixation probability of a neutral mutation is $1/N$, and thus the rate of neutral substitution $K=N u \rho$ again simplifies to $u$.

Here we show that the absence of spatial effects on the rate of neutral substitution in these models does not represent a general principle of evolution.  Rather, it is an artifact of two common modeling assumptions: (i) all spatial locations are equivalent under symmetry, and (ii) mutations are equally likely to arise in each location.  While assumption (i) is relaxed in a number of models, assumption (ii) is made almost universally.  Either of these assumptions alone is sufficient to guarantee $\rho=1/N$ and $K=u$, as we will show.  However, neither of these assumptions is necessarily satisfied in a biological population.  In particular, assumption (ii) is violated if some spatial locations experience more turnover (births and deaths) than others. Since each birth provides an independent opportunity for mutation, the rate at which new mutations appear at a location is proportional to its turnover rate \cite{Allen2012Success}.  Thus, even with a constant probability of mutation per birth, mutations may appear with different frequency at different locations.  If, in addition, fixation probability depends on a mutant's initial location, the rate of neutral substitution is altered.  

Our goal is to identify conditions under which the molecular clock rate is maintained ($K=u$), accelerated ($K>u$), or slowed ($K < u$) by spatial population structure. Our main results are as follows (see also Figure \ref{fig:Venn}):

\begin{figure}
\begin{center}
\includegraphics{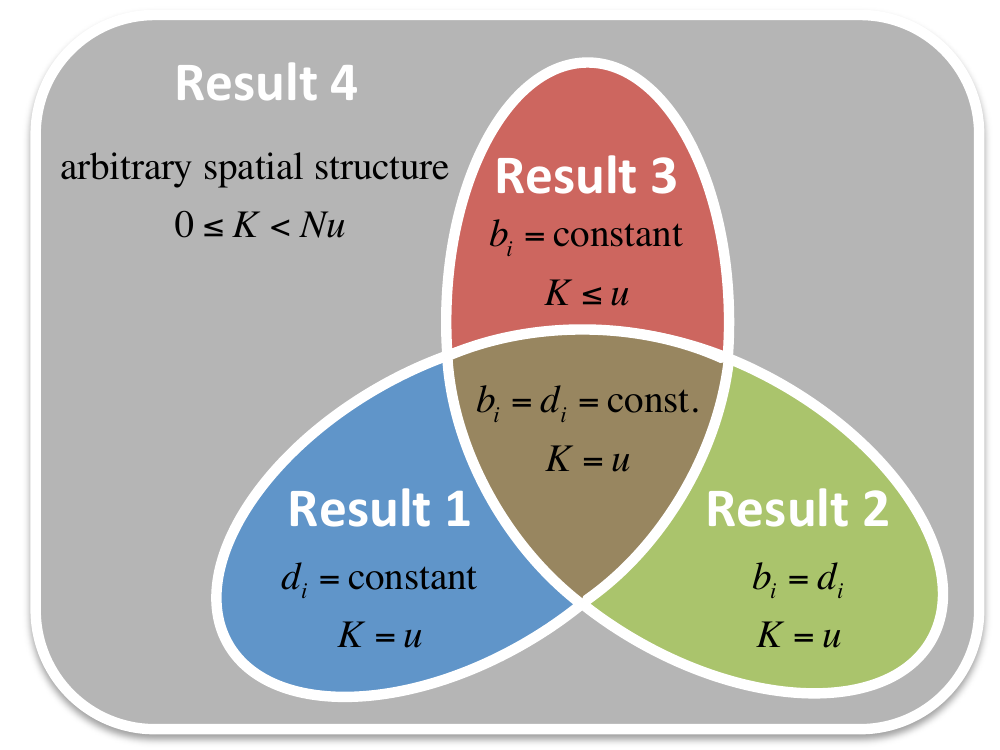}
\caption{\textbf{How spatial structure effects the molecular clock rate $K$.}  Relative to the rate in a well-mixed population ($K=u$), spatial structure can either accelerate ($K>u$) or slow ($K<u$) the accumulation of neutral substitutions, depending on how birth rates $b_i$ and death rates $d_i$ vary across sites.  The rate is unchanged from that of a well-mixed population ($K=u$) if either death rates are uniform across sites (Result 1), or the birth and death rates are equal at each site (Result 2).  Almost all previous studies of neutral drift in spatially structured populations fall into one of these two categories; thus the effects of spatial structure on the molecular clock rate are unappreciated.  We show that, in general, $K$ can take any non-negative value less than $Nu$ (Result 4).  If one adds the constraint that the birth rate is the same at each site, then the molecular clock rate cannot exceed that of a well-mixed population ($K \leq u$; Result 3).}
\label{fig:Venn}
\end{center}
\end{figure}

\begin{itemize}
\item[]{\bf Result 1} If the death rate is constant over sites, the molecular clock rate is identical to that of a well-mixed population: $K=u$.

\item[]{\bf Result 2} If the birth rate equals the death rate at each site, the molecular clock rate is identical to that of a well-mixed population: $K=u$.

\item[]{\bf Result 3}  If the birth rate is constant over sites, the molecular clock rate is less than or equal to that of a well-mixed population: $K \leq u$.

\item[]{\bf Result 4}  In general (with no constraints on birth or death rates), the molecular clock rate $K$ can take any non-negative value less than $Nu$.
\end{itemize}

\section*{Results}

\subsection*{Obtaining the neutral substitution rate}

Our investigations apply to a class of evolutionary models (formally described in the Materials and Methods) in which reproduction is asexual and the population size and spatial structure are fixed.  Specifically, there are a fixed number of sites, indexed $i=1, \ldots, N$.  Each site is always occupied by a single individual.  At each time-step, a replacement event occurs, meaning that the occupants of some sites are replaced by the offspring of others.  The replacement event is chosen according to a probability distribution---called the replacement rule---which depends on the model in question.  Since we consider only neutral mutations that have no phenotypic effect, the probabilities of replacement events are the same for every population state.

This class includes many established evolutionary models.  One important subclass is spatial Moran processes \cite{liggett1985particle,nowak1994more,Lieberman2005Graphs, antal2006evolutionary}, in which exactly one reproduction occurs each time-step. This class also includes spatial Wright-Fisher processes, in which the entire population is replaced each time-step \cite{NowakMay,taylor2011inclusive}.  In general, any subset of individuals may be replaced by the offspring of any other subset in a given time-step.  

For a given model in this class, we let $e_{ij}$ denote the probability that the occupant of site $j$ is replaced by the offspring of site $i$ in a single time-step. Thus the expected number of offspring of site $i$ over a single time-step is $b_i = \sum_{j=1}^N e_{ij}$.  The probability that node $i$ dies (i.e., is replaced) in a time-step is $d_i= \sum_{j=1}^N e_{ji}$.  The death rate $d_i$ can also be regarded as the rate of turnover at site $i$.  The total expected number of offspring per time-step is denoted $B = \sum_{i=1}^N b_i =\sum_{i=1}^N d_i = \sum_{i,j} e_{ij}$.
We define a generation to be $N/ B$ time-steps, so that, on average, each site is replaced once per generation.

We use this framework to study the fate of a single neutral mutation, as it arises and either disappears or becomes fixed.  The probability of fixation depends on the spatial structure and the initial mutant's location. We let $\rho_i$ denote the probability that a new mutation arising at site $i$ becomes fixed. ($\rho_i$ can also be understood as the reproductive value of site $i$ \cite{maciejewski2014reproductive}.)  We show in the Materials and Methods that the fixation probabilities $\rho_i$ are the unique solution to the system of equations
\begin{equation}
\label{eq:rhorecur}
\begin{cases}
\displaystyle d_i \rho_i  = \sum_{j=1}^N e_{ij} \rho_j & \text{for $i=1, \ldots, N$},\\
\displaystyle \sum_{i=1}^N \rho_i  = 1.
\end{cases}
\end{equation}
The identity $\sum_{i=1}^N \rho_i  = 1$ arises because $\rho_i$ equals the probability that the current occupant of site $i$ will become the eventual ancestor of the population, which is true for exactly one of the $N$ sites.  

To determine the overall rate of substitution, we must take into account the likelihood of mutations arising at each site.  The rate at which mutations arise at site $i$ is proportional to the turnover rate $d_i$, because each new offspring provides an independent chance of mutation.  Specifically, if mutation occurs at rate $u \ll 1$ per reproduction, then new mutations arise at site $i$ at rate $N u d_i/B$ per generation \cite{Allen2012Success}.  Thus the fraction of mutations that arise at site $i$ is $d_i/B$.  

The overall fixation probability $\rho$ of new mutations, taking into account all possible initial sites, is therefore
\begin{equation}
\label{eq:rhodef}
\rho=  \frac{1}{B}\sum_{i=1}^N d_i \rho_i.
\end{equation}
The molecular clock rate $K$ is obtained by multiplying the fixation probability $\rho$ by the total rate of mutation per generation:
\begin{equation}
\label{eq:Kdef}
K = N u \rho = \frac{Nu}{B}\sum_{i=1}^N d_i \rho_i.
\end{equation}
The units of $K$ are substitutions per generation.  Alternatively, the molecular clock can be expressed in units of substitutions per time-step, in which case the fomula is $\tilde{K} = Bu\rho = u \sum_{i=1}^N d_i \rho_i$. 

\subsection*{Effects of spatial structure}

How does spatial structure affect the rate of neutral substitution? In a well-mixed population, each individual's offspring is equally likely to replace each other individual, meaning that $e_{ij}$ is constant over all $i$ and $j$ (Fig.~\ref{fig:symmetric}a).  In this case, the unique solution to Eq.~\eqref{eq:rhorecur} is $\rho_i = 1/N$ for all $i$, and we recover Kimura's \cite{Kimura1968Rate} result $K=Nu(1/N)=u$.  Moreover, if each site is equivalent under symmetry, as in Figure \ref{fig:symmetric}b, this symmetry implies that $\rho_i = 1/N$ for all $i$ and $K=u$ as in the well-mixed case.  

\begin{figure}
\begin{center}
\begin{tabular}{cc}
\includegraphics{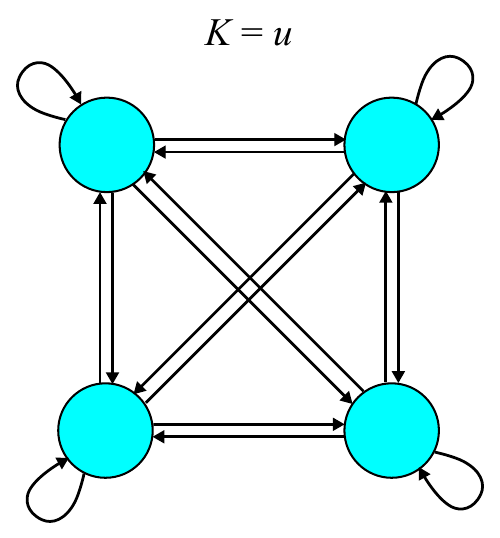} & \includegraphics{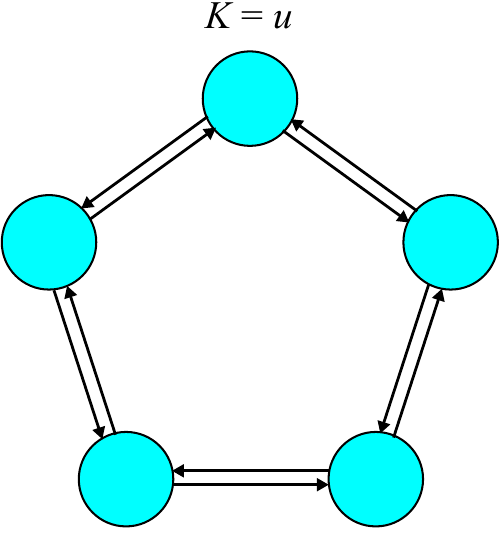}\\
(a) & (b)
\end{tabular}
\caption{\textbf{For spatial structures with symmetry, $K=u$.} (a) For a well-mixed population, represented by a complete graph with uniform edge weights, a neutral mutation has a $1/N$ chance of fixation, where $N$ is the population size.  It follows that the rate $K$ of neutral substitution in the population equals the rate $u$ of neutral mutation in individuals.  (b) The same result holds for spatial structures in which each site is \emph{a priori} identical, such as the cycle with uniform edge weights.}
\label{fig:symmetric}
\end{center}
\end{figure}

However, asymmetric spatial structure can lead to faster ($K>u$) or slower ($K<u$) molecular clock rates than a well-mixed population, as shown in Figure \ref{fig:examples}.  These results led us to seek general conditions for when spatial structure lead to faster, slower, or the same molecular clock rates as a well-mixed population.  

\begin{figure}
\begin{center}
\begin{tabular}{cc}
\includegraphics{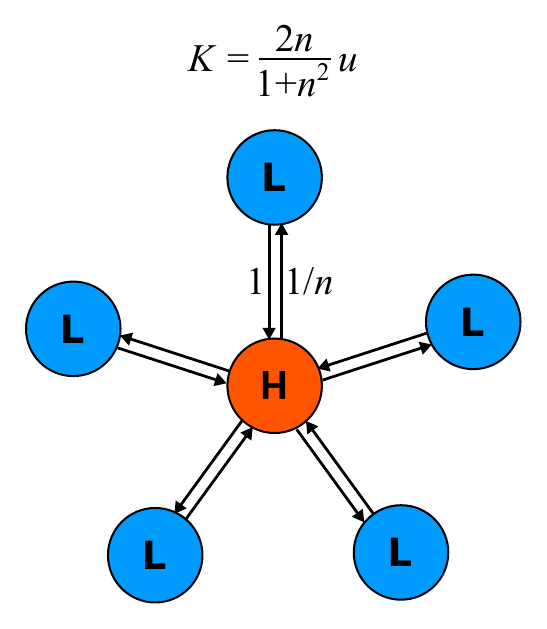} & \includegraphics{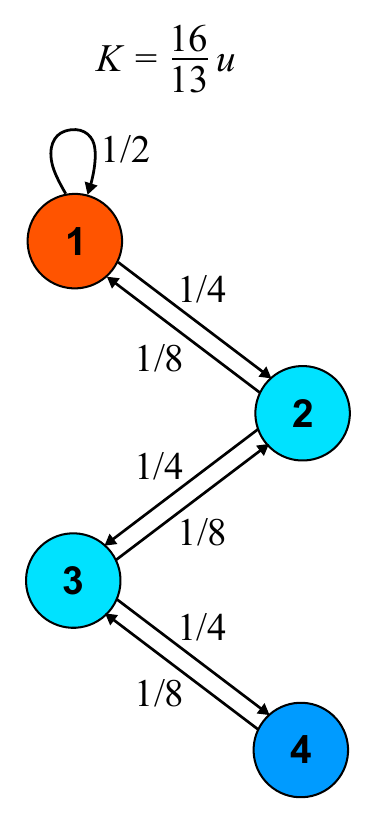}\\
(a) & (b)
\end{tabular}
\caption{\textbf{Asymmetric spatial structure affects the rate of neutral substitution.}  This is because the frequency of mutations and the probability of fixation differ across sites.  Turnover rates are indicated by coloration, with red corresponding to frequent turnover and consequently frequent mutation.  (a) A star consists of a hub and $n$ leaves, so that the population size is $N=n+1$. Edge weights are chosen so that the birth rates are uniform ($b_i=1$ for all $i$). Solving Eq.~\eqref{eq:rhorecur}, we obtain site-specific fixation probabilities of $\rho_\mathrm{H}=1/(1+n^2)$ and $\rho_\mathrm{L}=n/(1+n^2)$ for the hub and each leaf, respectively.  From Eq.~\eqref{eq:Kdef}, the molecular clock rate is $K = \frac{2n}{1+n^2} u$, which equals $u$ for $n=1$ and is less than $u$ for $n\geq 2$.  Thus the star structure slows down the rate of neutral substitution, in accordance with Result 3.  Intuitively, the slowdown occurs because mutations are more likely to arise at the hub, where their chances of fixation are reduced.  (b) A one-dimensional population with self-replacement only in site 1.  Solving Eq.~\eqref{eq:rhorecur} we find $\rho_1=\frac{8}{15}$ and $\rho_2=\frac{4}{15}$, $\rho_3=\frac{2}{15}$ and $\rho_4=\frac{1}{15}$. (The powers of two arise because there is twice as much gene flow in one direction as the other). From Eq.~\eqref{eq:Kdef}, the molecular clock rate is $K=\frac{16}{13}u>u$, thus the molecular clock is accelerated in this case.}
\label{fig:examples}
\end{center}
\end{figure}

We first find
\begin{result}
If the death rates $d_i$ are constant over all sites $i=1, \ldots, N$, then $\rho=1/N$, and consequently $K=u$.  
\end{result}
Thus the molecular clock rate is unaffected by spatial structure if each site is replaced at the same rate (Fig.~\ref{fig:Result12fig}a). This result can be seen by noting that if the $d_i$ are constant over $i$, then since $\sum_{i=1}^N d_i = B$, it follows that $d_i=B/N$ for each $i$.  Substituting in Eq.~\eqref{eq:rhodef} yields $\rho=1/N$.  

\begin{figure}
\begin{center}
\begin{tabular}{cc}
\includegraphics{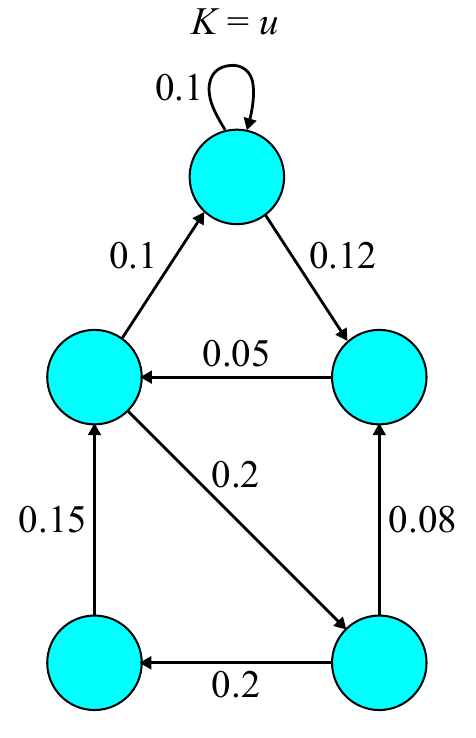} & \includegraphics{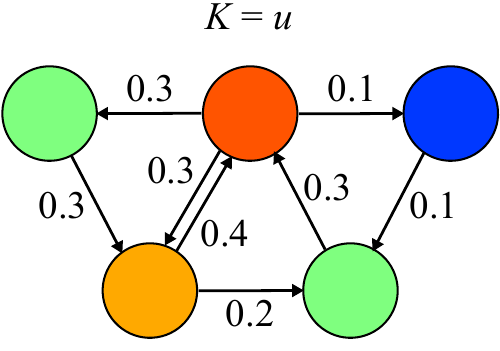}\\
(a) & (b)
\end{tabular}
\caption{\textbf{Results 1 and 2 give conditions leading to $K=u$.} (a) Our Result 1 states that the molecular clock has the same rate as in a well-mixed population, $K=u$, if the rate of turnover $d_i$ is uniform across sites, as in this example ($d_i=0.2$ for all $i$).  (b) Result 2 asserts that $\rho_i=1/N$ for all $i$---again implying $K=u$---if and only if each site has birth rate equal to death rate, $b_i=d_i$ for all $i$, as in this example.  Nodes are colored according to their rates of turnover $d_i$.}
\label{fig:Result12fig}
\end{center}
\end{figure}

Another condition leading to $K=u$ is the following:
\begin{result}
If the birth rate equals the death rate at each site ($b_i = d_i$ for all $i=1, \ldots, N$), then $\rho=1/N$, and consequently $K=u$.  Moreover, $b_i = d_i$ for all $i=1, \ldots, N$ if and only if the fixation probability is the same from each site ($\rho_i=1/N$ for all $i=1, \ldots, N$).
\end{result}
Thus if births and deaths are balanced at each site, then all sites provide an equal chance for mutant fixation (Fig.~\ref{fig:Result12fig}b).  In this case the molecular clock is again unchanged from the baseline value. In particular, if dispersal is symmetric in the sense that $e_{ij}=e_{ji}$ for all $i$ and $j$ then $K=u$.  Result 2 can be obtained by substituting $\rho_i=1/N$ for all $i$ into Eq.~\eqref{eq:rhorecur} and simplifying to obtain $b_i = d_i$ for all $i$ (details in Methods).  Alternatively, Result 2 can be obtained as a corollary to the Circulation Theorem of Lieberman et al.~\cite{Lieberman2005Graphs}.

Our third result reveals a ``speed limit" to neutral evolution in the case of constant birth rates:
\begin{result}
If the birth rates $b_i$ are constant over all sites $i=1, \ldots, N$, then $\rho \leq 1/N$, and consequently $K \leq u$, with equality if and only if the death rates $d_i$ are also constant over sites.
\end{result}
In other words, a combination of uniform birth rates and nonuniform death rates slows down the molecular clock.  An instance of this slowdown in shown in Figure \ref{fig:examples}a.  Intuitively, the sites at which mutations occur most frequenly are those with high death rates $d_i$; because of these high death rates, these sites on the whole provide a reduced chance of fixation.  The proof of this result, however, is considerably more intricate than this intuition would suggest (see Methods).

Finally, we investigate the full range of possible values for $K$ with no constraints on birth and death rates.  We find the following:
\begin{result}
For arbitrary spatial population structure (no constraints on $e_{ij}$) the fixation probability can take any value $0 \leq \rho < 1$, and consequently, the molecular clock can take any rate $0 \leq K < Nu$.
\end{result}
This result is especially surprising, in that it implies that the probability of fixation of a new mutation can come arbitrarily close to unity.  Result 4 can be proven by considering the hypothetical spatial structure illustrated in Figure \ref{fig:noconstraint}.  Any non-negative value of $\rho$ less than 1 can be obtained by an appropriate choice of parameters (details in Methods).

\begin{figure}
\begin{center}
\includegraphics{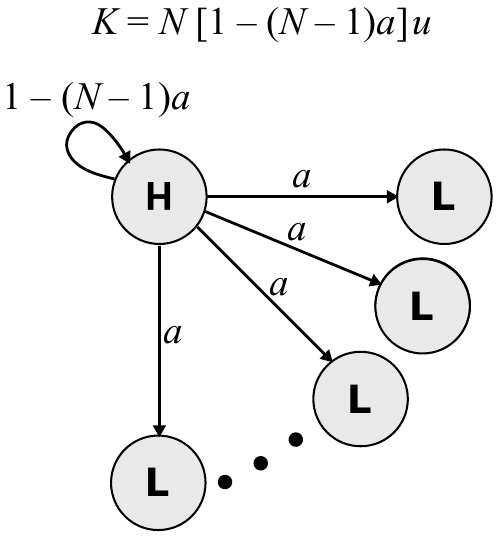}
\caption{\textbf{Result 4 shows that $K$ can achieve any value $0 \leq K < Nu$.}  This is proven by considering a population structure with unidirectional gene flow from a hub (H) to $N-1$ leaves (L).  Fixation is guaranteed for mutations arising in the hub ($\rho_\mathrm{H}=1$) and impossible for those arising in leaves ($\rho_\mathrm{L}=0$).  The overall fixation probability is equal by Eq.~\eqref{eq:rhodef} to the rate of turnover at the hub: $\rho= d_\mathrm{H} = 1-(N-1)a$.  The molecular clock rate is therefore $K=Nu\rho = N[1-(N-1)a]u$.  It follows that $K>u$ if and only if $a <1/N$. Intuitively, the molecular clock is accelerated if the hub experiences more turnover (and hence more mutations) than the other sites.  Any value of $\rho$ greater than or equal to 0 and less than 1 can be achieved through a corresponding positive choice of $a$  less than or equal to $1/(N-1)$.  For $a=1/(N-1)$ we have $K=0$, because mutations arise only at the leaves where there is no chance of fixation.  At the opposite extreme, in the limit $a \to 0$, we have $K \to Nu$.}
\label{fig:noconstraint}
\end{center}
\end{figure}

\subsection*{Application to upstream-downstream populations}

To illustrate the effects of asymmetric dispersal on the molecular clock, we consider a hypothetical population with two subpopulations, labeled ``upstream" and ``downstream" (Figure \ref{fig:updown}). The sizes of these subpopulations are labeled $N_\uparrow$ and $N_\downarrow$, respectively.  Each subpopulation is well-mixed, with replacement probabilities $e_\uparrow$ for each pair of upstream sites and $e_\downarrow$ for each pair of downstream sites.  Dispersal between the subpopulations is represented by the replacement probabilities $e_\rightarrow$ from each upstream site to each downstream site, and $e_\leftarrow$ from each downstream site to each upstream site.  We assume there is net gene flow downstream, so that $e_\rightarrow > e_\leftarrow$.  

\begin{figure}
\begin{center}
\includegraphics{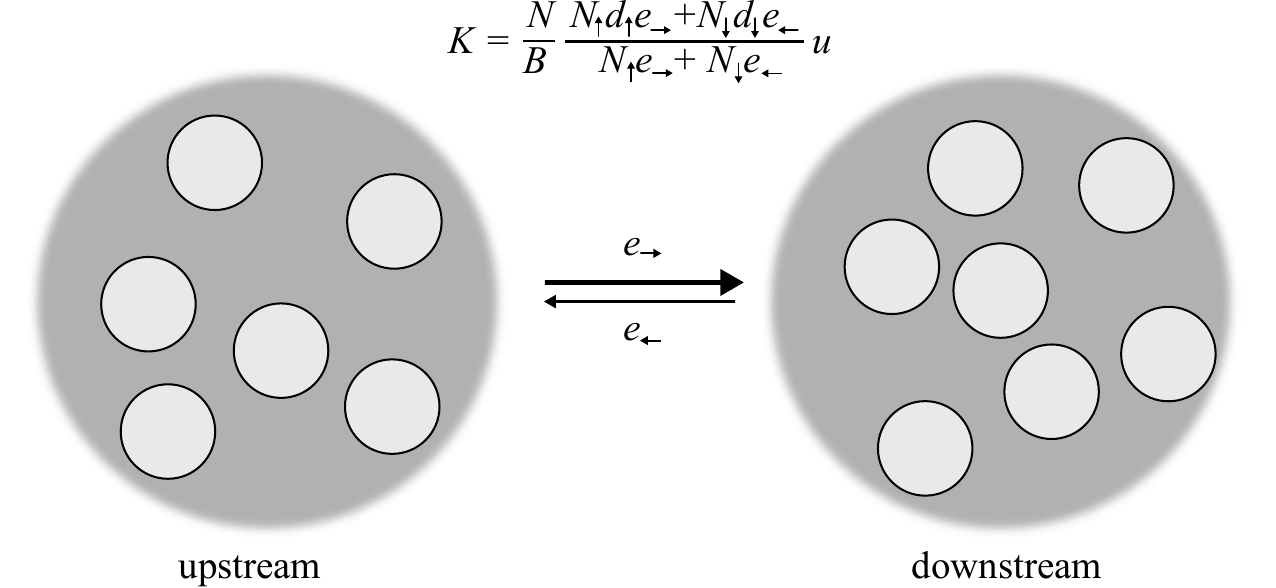}
\caption{\textbf{A model of a population divided into upstream and downstream subpopulations.}  Each subpopulation is well-mixed.  The replacement probability $e_{ij}$ equals $e_\uparrow$ if sites $i$ and $j$ are both upstream, $e_\downarrow$ if $i$ and $j$ are both downstream, $e_\rightarrow$ if $i$ is upstream and $j$ is downstream, and $e_\leftarrow$ if $i$ is downstream and $j$ is upstream.  We suppose there is net gene flow downstream, so that $e_\rightarrow > e_\leftarrow$.  We find that the molecular clock is accelerated, relative to the well-mixed case, if and only if the upstream subpopulation experiences more turnover than the downstream subpopulation: $K>u$ if and only if $d_\uparrow > d_\downarrow$.}
\label{fig:updown}
\end{center}
\end{figure}

Solving Eq.~\eqref{eq:rhorecur}, we find that the fixation probabilities from each upstream site and each downstream site, respectively, are 
\begin{equation}
\begin{split}
\label{eq:rhoupdown}
\rho_{\uparrow} = \frac{e_{\rightarrow}}{ N_{\uparrow}e_{\rightarrow}+ N_{\downarrow}e_{\leftarrow}},\\
\rho_{\downarrow} = \frac{e_{\leftarrow}}{N_{\uparrow}e_{\rightarrow} + N_{\downarrow}e_{\leftarrow}}.
\end{split}
\end{equation}
These fixation probabilities were previously discovered for a different model of a subdivided population \cite{Lundy1998eco}.  Substituting these fixation probabilities into Eq.~\eqref{eq:Kdef} yields the molecular clock rate:
\begin{equation}
K = \frac{N}{B} \,
\frac{N_\uparrow d_\uparrow e_\rightarrow + N_\downarrow d_\downarrow e_\leftarrow}
{N_{\uparrow}e_{\rightarrow} + N_{\downarrow}e_{\leftarrow}} \, u.
\end{equation}
Above, $d_\uparrow$ and $d_\downarrow$ are the turnover rates in the upstream and downstream populations, respectively, and $B = N_{\uparrow}d_{\uparrow}+N_{\downarrow}d_{\downarrow}$ is the total birth rate per time-step.  In Methods, we show that $K>u$ if and only if $d_\uparrow > d_\downarrow$; that is, the molecular clock is accelerated if and only if there is more turnover in the upstream population than in the downstream population.  

\subsection*{Application to epithelial cell populations}

Our results are also applicable to somatic evolution in self-renewing cell populations, such as the crypt-like structures of the intestine.  Novel labeling techniques have revealed that neutral mutations accumulate in intestinal crypts at a constant rate over time \cite{kozar2013continuous}.  The cell population in each crypt is maintained by a small number of stem cells that reside at the crypt bottom and continuously replace each other in stochastic manner (Figure \ref{fig:crypt}; \cite{lopez2010intestinal,snippert2010intestinal,vermeulen2014review}).  We focus on the proximal small intestine in mice, for which recent studies \cite{kozar2013continuous,vermeulen2013defining} suggest there are $\sim 5$ active stem cells per crypt, each replaced $\sim 0.1$ times per day by one of its two neighbors. In our framework, this corresponds to a cycle-structured population of size 5 with replacement rates $0.05/\mathrm{day}$ between neighbors, so that $d_i=0.1/\mathrm{day}$ for all $i$.

\begin{figure}
\begin{center}
\includegraphics{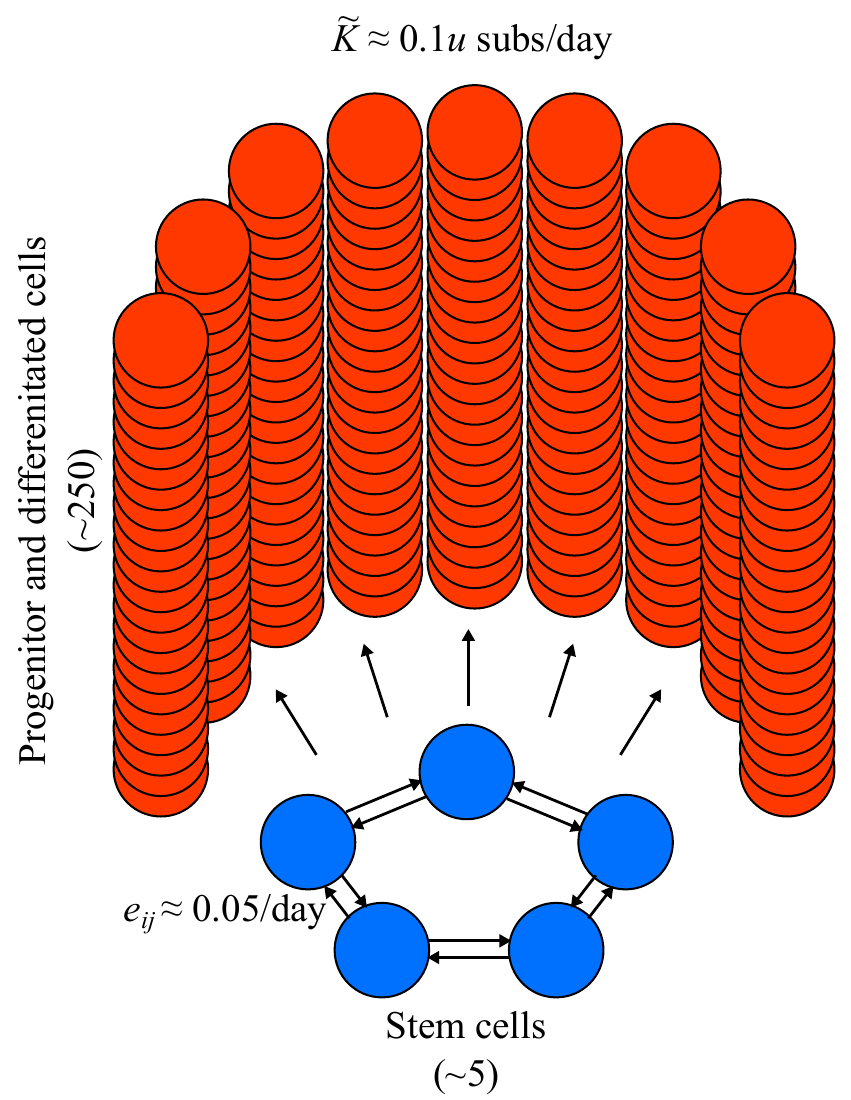}
\caption{\textbf{A simple model of cell replacement structure in epithelial crypts of the small intestine, based on results of \cite{kozar2013continuous} and \cite{vermeulen2013defining}.}  A small number of stem cells ($N_\uparrow \sim 5$) residing at the bottom of the intestinal crypt and are replaced at rate $d_{\uparrow} \sim 0.1$ per stem cell per day.  Empirical results \cite{kozar2013continuous,vermeulen2013defining} suggest a cycle structure for stem cells. To achieve the correct replacement rate we set $e_{ij} = 0.05 / \mathrm{day}$ for each neighboring pair.   Stem cells in an individual crypt replace a much larger number of progenitor and differentiated cells ($\sim 250$; \cite{barker2012identifying}).  These downstream progenitor and differentiated cells are replaced about every day \cite{barker2012identifying}.  The hierarchical organization of  intestinal crypts, combined with the low turnover rate of stem cells, limits the rate of neutral genetic substitutions ($\tilde{K} \approx 0.1u$ substitutions per day), since only mutations that arise in stem cells can fix. }
\label{fig:crypt}
\end{center}
\end{figure}

Only mutations that arise in stem cells can become fixed within a crypt; thus we need only consider the fixation probabilities and turnover rates among stem cells.  By symmetry among the stem cells, $\rho_i = 1/5$ for each of the five stem cell sites.  The molecular clock rate is therefore $\tilde{K} = u \sum_{i=1}^5 d_i \rho_i = 0.1 u$ substitutions per day.  This accords with the empirical finding that, for a neutral genetic marker with mutation rate $u \approx 1.1\times10^{-4}$, substitutions accumulate at a rate $\tilde{K} \approx 1.1 \times 10^{-5}$ per crypt per day \cite{kozar2013continuous}.

Does crypt architecture limit the rate of genetic change in intestinal tissue?  Intestinal crypts in mice contain $\sim 250$ cells and replace all their cells about once per day \cite{barker2012identifying}.  If each crypt were a well-mixed population, the molecular clock rate would be $\tilde{K}=Bu/N \approx u$ substitutions per day.  Thus the asymmetric structure of these epithelial crypts slows the rate of neutral genetic substitution tenfold.

\subsection*{Application to the spread of innovations}

Our results can also be applied to innovations that spread by imitation on social networks.  In this setting, a mutation corresponds to an innovation that could potentially replace an established convention.  Neutrality means that the innovation is equally likely to be imitated as the established convention.  

To investigate whether human social networks promote or hinder the fixation of selectively neutral innovations, we analyzed 973 Twitter networks from the Stanford Large Network Dataset Collection \cite{leskovec2011stanford}.  Each of these ``ego networks" represents follower relationships among those followed by a single ``ego" individual (who is not herself included in the network).   We oriented the links in each network to point from followee to follower, corresponding to the presumed direction of information flow.  Self-loops were removed.  To ensure that fixation is possible, we eliminated individuals that could not be reached, via outgoing links, from the node with greatest eigenvector centrality.  The resulting networks varied in size from 3 to 241 nodes (Fig.~\ref{fig:twitter}).  All links were assigned equal weight, so that the molecular clock rate (as a multiple of $u$) depends only on the network topology.

We found that the mean value of $K$ among these ego networks is $0.557u$, with a standard deviation of $0.222u$.  19 of the 973 networks ($2\%$) have $K>u$.  Two networks have $K=u$ exactly; each of these has $N=3$ nodes and uniform in-degree $d_i$, thus $K=u$ follows from Result 1 for these networks.  We found a weak but statistically significant negative relationship between $N$ and $K/u$ ($\mathrm{slope}\approx -0.0016, \, R \approx -0.45, \, p < 10^{-48}$).  In summary, while some Twitter ego-networks accelerate the substitution of neutral innovations, the vast majority (especially the larger ones) slow this rate.

\begin{figure}
\begin{center}
\begin{tabular}{cc}
\includegraphics[scale=0.4]{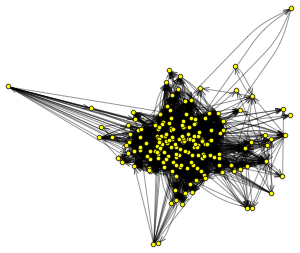} & \includegraphics[scale=0.4]{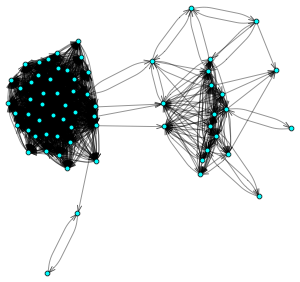}\\
Smallest $K$ & Largest $K$\\
$N=185, \; K \approx 0.088u$ & $N=66, \; K \approx 1.264u$\\
(a) & (b)\\
\includegraphics[scale=0.4]{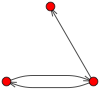} & \includegraphics[scale=0.4]{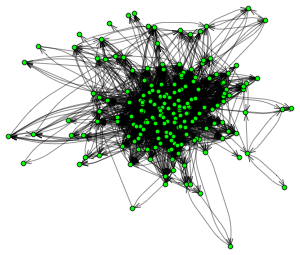}\\
$N=3, \; K =u$ & $N=199, \; K \approx 1.065u$\\
(c) & (d)\\
\includegraphics[scale=0.4]{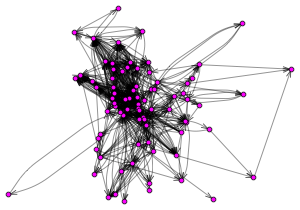} & \includegraphics[scale=0.5]{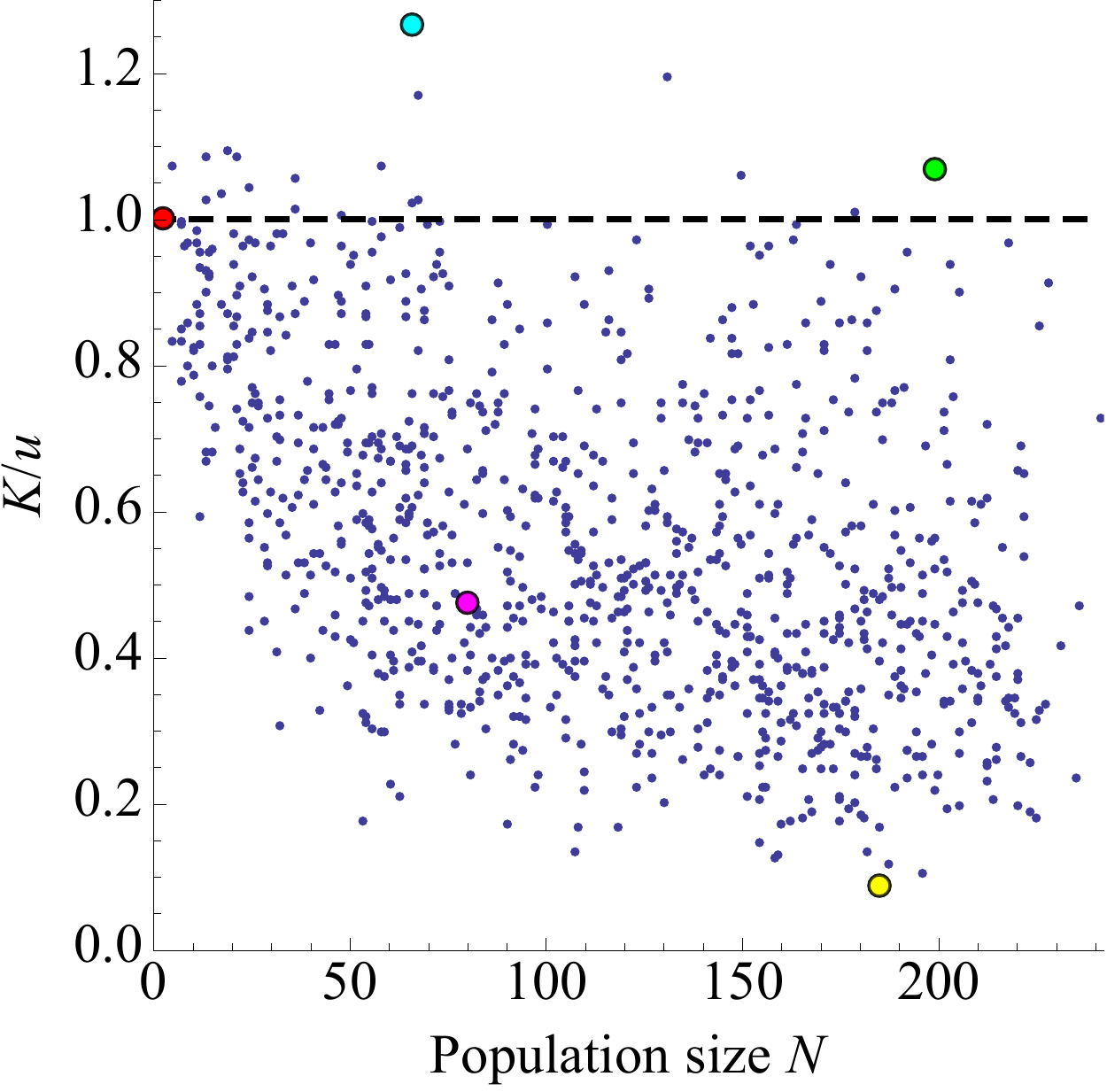}\\
$N=80, \; K \approx 0.473u$ & \\
(e) & (f)
\end{tabular}
\caption{\textbf{The rate of neutral substitution $K$ on Twitter ``ego networks".}  (a-e) Five of the 973 networks analyzed, including those with (a) the largest value of $K$, (b) the smallest value of $K$, and (c) the fewest nodes.  (f) A scatter plot of  $K/u$ versus $N$ reveals a weak negative correlation ($\mathrm{slope}\approx -0.0016, \, R \approx -0.45, \, p < 10^{-48}$). The colored dots on the scatter plot correspond to the networks shown in (a-e). The dashed line corresponds to $K/u=1$, above which network topology accelerates neutral substitution.}
\label{fig:twitter}
\end{center}
\end{figure}

\section*{Discussion}

The spatial structure of a population affects its evolution in many ways, for example by promoting cooperative behaviors \cite{NowakMay,DurrettLevin,Nakamaru,vanBaalen,Ohtsuki,nowak2010evolutionary,allen2014games,debarre2014social}, genetic variation \cite{felsenstein1976theoretical,hedrick1976genetic,slatkin1987gene,Cherry2003Diffusion}, and speciation \cite{macarthur1967theory,doebeli2003speciation,deAguiar2009global}.  Asymmetric spatial structure in particular is known to have important consequences for adaptation \cite{kawecki2002evolutionary,lenormand2002gene,Lieberman2005Graphs,antal2006evolutionary,garant2007multifarious,Houchmandzadeh2011Fixation,shakarian2012review,maciejewski2014evolutionary} and for genetic diversity \cite{Nagylaki1978eco,vuilleumier2007patch,morrissey2009maintenance}.  Our work shows that asymmetric spatial structure also affects the rate of neutral substitution.  

In light of Results 1 and 2, we see that the critical factors driving the changes in molecular clock rate are differential rates of turnover ($d_i$) and net offspring production ($b_i - d_i$) across sites.  If both $d_i$ and $b_i-d_i$ differ across sites, the molecular clock rate will in general differ from that of a well-mixed population (Result 4).  If additionally $b_i$ is constant across sites, Result 3 guarantees that neutral substitution is slowed relative to the well-mixed case.  

While our class of models includes a number of limiting assumptions (e.g., constant population size), our qualitative results do not appear to depend on these assumptions.  We expect that differential rates of turnover and net offspring production will also alter molecular clock rates in models outside of the class considered here. However, the result that molecular clock rates can come abritrarily close to $Nu$---which depends on the existence of a single ``hub" individual seeding the rest of the population as in Figure \ref{fig:noconstraint}---may require modification for sexually reproducing populations.

Conditions leading to altered molecular clock rate may occur frequently in natural populations.  Asymmetric dispersal may result, for example, from prevailing winds \cite{tackenberg2003assessment,munoz2004wind}, water currents \cite{collins2010asymmetric,Pringle2011eco}, or differences in elevation \cite{angert2009niche,okeefe2009source}.  Differences in habitat quality may lead to variance in birth and death rates across sites (nonuniform $b_i$ and $d_i$).  It is known that such asymmetries in spatial structure have important consequences for adaptation \cite{kawecki2002evolutionary,lenormand2002gene,Lieberman2005Graphs,garant2007multifarious} and for genetic diversity \cite{Nagylaki1978eco,vuilleumier2007patch,morrissey2009maintenance}; our work shows that they also have consequences for the rate of neutral genetic change.  In particular, the molecular clock is accelerated if there is greater turnover in ``upstream" subpopulations.

Many somatic cell populations have strongly asymmetric patterns of replacement, with small stem cell pools feeding much larger populations of progenitor and differentiated cells.  Our results support the idea that infrequent turnover of stem cells suppresses the accumulation of somatic mutations \cite{nowak2003linear,dingli2007stochastic,vermeulen2013defining,bozic2013unwanted,snippert2014unwanted}. The rate at which these mutations accumulate has significant implications for the onset of cancer \cite{knudson2001two,attolini2009evolutionary,vogelstein2013cancer} and the likelihood of successful cancer therapy \cite{GCbook,komarova2005drug,michor2006evolution,bozic2013evolutionary}.  It is important to note, however, that cancer can alter the structure of cell hierarchies; for example, by altering the number and replacement rate of stem cells \cite{kozar2013continuous} and/or allowing differentiated cells to revert to stem cells \cite{gupta2011stochastic}.  This restructuring may, in turn, alter the rate of genetic substitution.    

The influence of social network topology on the spread of ideas and behaviors is a question of both theoretical and practical interest \cite{may2001infection,watts2002simple,christakis2007spread,rosvall2008maps,castellano2009statistical,centola2010spread,hill2010infectious,bakshy2012role}.  The neutral substition rate $K$ on social networks describes how innovations spread when they are equally likely to be imitated as an existing convention.  Since it depends only on network topology, $K/u$ can be used as a statistic to isolate how this topology affects the rate of social change. Our finding that most Twitter ego networks have rates less than those of well-mixed populations contrasts with results from epidemiological models, which generally find that the heterogeneity of real-world social networks accelerates contagion \cite{may2001infection,moreno2002epidemic,nekovee2007theory}.  Further research is needed to determine which kinds of real-world networks accelerate neutral substitution, and how neutral substitution rates relate to other measures of diffusion and contagion on networks.  

If spatial structure remains constant over time, then the neutral substitution rate $K$ is in all cases a constant multiple of the mutation rate $u$.  In this case, absent other complicating factors such as selection, neutral mutations will accrue at a constant rate that can be inferred from genetic data.  However, if the spatial sturcture changes over time---due, for instance, to changes in climate, tumorogenesis, or social network dynamics \cite{skyrms2009dynamic,rosvall2010mapping,rand2011dynamic,wardil2014origin}---the rate of neutral substitutions may change over time as well.

In our framework, the molecular clock rate is assumed to depend only on the rate at which mutations arise and their probability of becoming fixed.  This approach assumes that the time to fixation is typically shorter than the expected waiting time $1/(Nu\rho)$ for the next successful mutation.  If this is not the case, then substitution rates are also affected by fixation times.  These fixation times are themselves affected by spatial structure \cite{Cherry2003Diffusion,Whitlock2003Fixation,PhysRevLett.101.258701,constable2014population}, leading to further ramifications for the molecular clock rate \cite{frean2013effect}.  

The starting point of our analysis is that the convention, commonly assumed in evolutionary models, that mutations arise with equal frequency at each site, is not necessarily the most natural choice. If there is a constant probability of mutation per birth, then mutations instead arise in proportion to the rate of turnover at a site.  Here we have applied this principle to study the rate of neutral substitution.  However, this principle also holds for advantageous and deleterious mutations, as well as those whose effect varies with location.  It also applies to frequency-dependent selection \cite{maciejewski2014evolutionary,tarnita2014measures}.  Re-analyzing existing models using this new mutation convention may reveal further surprises about how spatial structure affects evolution.

\section*{Methods}
\subsection*{Class of Models}

In the class of models we consider, there are $N$ sites indexed $i=1, \ldots, N$, each always occupied by a single individual.  At each time-step, a replacement event occurs, in which the occupants of some positions are replaced by the offspring of others.  A replacement event is identified by a pair $(R, \alpha)$, where $R \subset \{1, \ldots, N\}$ is the set of sites whose occupants are replaced by new offspring, and $\alpha: R \to \{1, \ldots, N\}$ is a set mapping indicating the parent of each new offspring.  (This notation was introduced in Ref.~\cite{Allen2012Success}.) A sample replacement event is illustrated in Figure \ref{fig:replacefig}.  

\begin{figure}
\begin{center}
\includegraphics{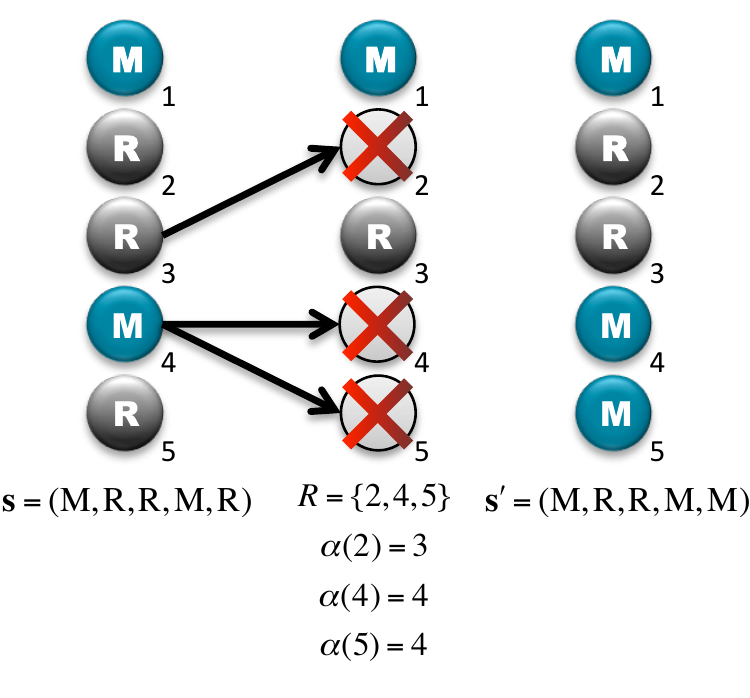}
\caption{\textbf{Illustration of a replacement event.}  In this case, the occupant of site 3 produced one offspring, which displaced the occupant of site 2.  The occupant of site 4 produced two offspring, one remaining in site 4 and displacing the parent, and the other displacing the occupant of site 5.  Thus the set of replaced positions is $R=\{2,4,5\}$ and the offspring-to-parent map $\alpha$ is given by $\alpha(2)=3, \alpha(4)=4, \alpha(5)=4$.  There is no mutation in the example illustrated here, so each offspring inherits the type of the parent.  Thus the population transitions from state $\vs=(\M,\R,\R,\M,\R)$ to $\vs'=(\M,\R,\R,\M,\M)$.}
\label{fig:replacefig}
\end{center}
\end{figure}

A model of neutral evolution is specified by a probability distribution over the set of possible replacement events.  We call this probability distribution the replacement rule of the model.  The probability of a replacement event $(R, \alpha)$ in this distribution will be denoted $p(R,\alpha)$.  Neutrality is represented by independence of the probabilities $p(R,\alpha)$ from the state of the evolutionary process.

The only assumption we place on the replacement rule is that it should be possible for at least one site $i$ to contain the eventual ancestor of the population:
\begin{assum}
\label{fixationaxiom}
There is an $i \in \{1, \ldots, N\}$, a positive integer $n$ and a finite sequence $\{ (R_k, \alpha_k) \}_{k=1}^n$ of replacement events such that
\begin{itemize}
\item $p (R_k, \alpha_k)>0$ for all $k$, and
\item For all individuals $j \in \{1, \ldots, N\}$, 
\begin{equation}
\label{eq:ancestor}
\alpha_{k_1} \circ \alpha_{k_2} \circ \cdots \circ \alpha_{k_m} (j) = i,
\end{equation}
where $k_1, \ldots < k_m$ is the maximal subsequence of $1, \ldots, n$ such that the compositions in Eq.~\eqref{eq:ancestor} are well-defined.
\end{itemize}
\end{assum}
We observe that Eq.~\eqref{eq:ancestor} traces the ancestors of $j$ backwards in time to $i$.  This assumption is equivalent to saying that there is at least one site $i$ such that mutations arising at site $i$ have nonzero fixation probability.  Assumption 1 precludes degenerate cases such as two completely separate subpopulations with no gene flow between them, in which case fixation would be impossible.

For a specific replacement event $(R, \alpha)$, the sites that are replaced by the offspring of $i$ is the given by the preimage $\alpha^{-1}(i) \subset R$, i.e., the set of indices that map to $i$ under $\alpha$.  The number of offspring of site $i$ is equal to $|\alpha^{-1}(i)|$, the cardinality (size) of this preimage.  Taking all possible replacement events into account, the birth rate (expected offspring number) of site $i$ is given by
\[
b_i = \E \big[ |\alpha^{-1}(i)| \big] = \sum_{(R, \alpha)} p(R, \alpha) \, | \alpha^{-1}(i)|.
\]
The death rate (probability of replacement) of site $i$ is equal to 
\[
d_i = \Pr[ i \in R ] = \sum_{\substack{(R, \alpha)\\ i \in R}} p (R,\alpha).
\]
The probability that the offspring of $i$ displaces the occupant of $j$ in a replacement event is
\[
e_{ij} = \Pr[ \text{$j \in R$ and $\alpha(j)=i$} ] = \sum_{\substack{(R, \alpha)\\ i \in R\\ \alpha(j)=i}} p (R,\alpha).
\]
We observe that $b_i = \sum_{j=1}^N e_{ij}$ and $d_i = \sum_{j=1}^N e_{ji}$.

\subsection*{The evolutionary Markov chain}

To study the fixation of new mutations, we consider evolution with two genetic types: mutant (M) and resident (R).  The type occupying site $i$ in a given state of the evolutionary process is denoted $s_i \in \{\M,\R\}$.  The overall state of the process can be recorded as a string $\vs = (s_1, \ldots, s_N)$ of length $N$ with alphabet $\{\M, \R\}$. 

We assume that there is no further mutation after an initial mutant appears; thus offspring faithfully inherit the type of the parent.  It follows that if the current state is $\vs = (s_1, \ldots, s_N)$ and the replacement event $(R, \alpha)$ occurs, then the new state $\vs = (s_1', \ldots, s_N')$ is given by
\[
s_i' = \begin{cases}
s_i & i \notin R\\
s_{\alpha(i)} & i \in R. \end{cases}
\]

The above assumptions describe a Markov chain on the set of strings of length $N$ with alphabet $\{\M, \R\}$. We call this the evolutionary Markov chain.  It is straightforward to show that, from any initial state, this Markov chain will eventually converge upon one of two absorbing states: $(\M, \ldots, \M)$ or $(\R, \ldots, \R)$ \cite{Allen2012Success}.  In the former case, we say that the mutant type has gone to fixation; in the latter case we say that the mutant type has disappeared.

\subsection*{The ancestral Markov chain}

It is useful to consider a variation on the evolutionary Markov chain called the \emph{ancestral Markov chain}, denoted $\cA$. Instead of two types, the ancestral Markov chain has $N$ types, labelled $1, \ldots, N$, which correspond to the $N$ members of a ``founding generation" of the population.  Evolution proceeds according to the given replacement rule, as described in the Material and Methods. The states of the ancestral Markov chain are strings of length $N$ with alphabet $\{1, \ldots, N\}$.

The ancestral Markov chain has a canonical initial state $\va_0=(1, \ldots, N)$, in which the type of each individual  corresponds to its location.  This initial state identifies the locations of each founding $(t=0)$ member of the population.  The ancestral Markov chain with initial state $\va_0$---in our notation, $(\cA, \va_0)$---has the useful feature that at any time $t \geq 0$, the state $\va(t) =(a_1(t), \ldots, a_N(t))$ indicates the site occupied by each individual's founding  ancestor.  In other words, if $a_j(t)=i$, then the current occupant of site $j$ is descended from the founder that occupied site $i$.

To relate the evolutionary Markov chains $\cA$ and $\cM$, consider any set mapping $\gamma:  \{1, \ldots, N\} \to \{\M, \R\}$.  We think of $\gamma(i)$ as giving the genetic type (M or R) of each member $i=1, \ldots, N$ of the founding generation.  The mapping $\gamma$ induces a mapping $\tilde{\gamma}$ from states of $\cA$ to states of $\cM$, defined by $\tilde{\gamma}(a_1, \ldots, a_N)=(\gamma(a_1), \ldots, \gamma(a_N))$.  For any state $\va(t)$ of  $(\cA,\va_0)$, the string $\tilde{\gamma}(\va(t)) \in \{1, \ldots, N\}^N$ indicates the genetic type of each individual's ancestor in the founding generation.  Since genetic types are inherited faithfully, it follows that $\tilde{\gamma}(\va(t))$ gives the current genetic type of each individual.  Thus if $\cM$ and $\cA$ follow the same replacement rule, we have that for any such mapping $\gamma$ and any string $\vs \in \{\M,\R\}^N$,
\begin{equation}
\label{eq:MArelation}
\Prob_{(\cA, \va_0)} \, [ \tilde{\gamma} (\va(t)) = \vs ] \quad = \quad 
\Prob_{(\cM, \tilde{\gamma}(\va_0))} \, [ \vs(t) = \vs ].
\end{equation}

\subsection*{Basic results on fixation probabilities}

We define the \emph{fixation probability from site $i$}, $\rho_i$, to be the probability that, from an initial state a mutant in site $i$ and residents in all other sites, the mutant type goes to fixation:
\begin{equation}
\label{eq:rhoidef}
\rho_i  = \lim_{t \to \infty} \Prob_{(\cM, \vm_i)} \, [ \vs(t) = (\M, \ldots, \M) ] 
\end{equation}
Above, $\vm_i$ denotes the initial state consisting of an M in position $i$ and R's elsewhere.  The ordered pair $(\cM, \vm_i)$ refers to the Markov chain $\cM$ with initial state $\vm_i$. $\Prob_{(\cM, \vm_i)}[\vs(t)=\vs]$ denotes the probability that the state of $(\cM, \vm_i)$ is $\vs$ at time $t \geq 0$.  

We can use the relationship between the ancestral Markov chain $\cA$ and the evolutionary Markov chain $\cM$ to obtain an alternate expression for the site-specific fixation probability $\rho_i$:
\begin{theorem}
\label{lem:rhoaltdef}
$\rho_i = \lim_{t \to \infty} \Prob_{(\cA, \va_0)} \, [ \va(t) = (i, \ldots, i) ]$.
\end{theorem}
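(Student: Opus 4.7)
My plan is to prove Theorem~\ref{lem:rhoaltdef} by a direct application of the duality identity~\eqref{eq:MArelation} between the evolutionary Markov chain $\cM$ and the ancestral Markov chain $\cA$, with a carefully chosen founder labeling $\gamma$.

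First, I would define $\gamma: \{1, \ldots, N\} \to \{\M, \R\}$ by $\gamma(i) = \M$ and $\gamma(j) = \R$ for every $j \neq i$; this assigns mutant type only to founder $i$. Since $\va_0 = (1, \ldots, N)$, the induced string map satisfies $\tilde{\gamma}(\va_0) = \vm_i$. Moreover, for any ancestral state $\va = (a_1, \ldots, a_N)$, the identity $\tilde{\gamma}(\va) = (\M, \ldots, \M)$ holds precisely when $\gamma(a_j) = \M$ for every $j$, which by construction of $\gamma$ forces $a_j = i$ for every $j$. Hence the preimage of the all-mutant state under $\tilde{\gamma}$ is exactly the singleton $\{(i, \ldots, i)\}$.

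Next, applying~\eqref{eq:MArelation} with this $\gamma$ and $\vs = (\M, \ldots, \M)$, and substituting the two identifications above, I obtain
\[
\Prob_{(\cA, \va_0)} \, [ \va(t) = (i, \ldots, i) ] \;=\; \Prob_{(\cM, \vm_i)} \, [ \vs(t) = (\M, \ldots, \M) ]
\]
for every $t \geq 0$. Taking $t \to \infty$, the right-hand side tends to $\rho_i$ by the definition~\eqref{eq:rhoidef}, so the left-hand side does as well, which is the claim.

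The main obstacle is conceptual rather than computational: one must see that tracking which founder is ancestral to each present-day site under $\cA$ is genealogically dual to running $\cM$ with mutants labeling that founder, which is precisely what~\eqref{eq:MArelation} formalizes. Once $\gamma$ is chosen to concentrate all mutant mass on founder $i$, the preimage computation is immediate and convergence on the ancestral side requires no separate argument, since the finite-$t$ equality transfers the already-known convergence of the absorption probability on the evolutionary side.
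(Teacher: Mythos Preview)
Your proof is correct and follows essentially the same approach as the paper: define the labeling $\gamma$ that marks only founder $i$ as mutant, observe that $\tilde{\gamma}(\va_0)=\vm_i$ and that $\tilde{\gamma}^{-1}\{(\M,\ldots,\M)\}=\{(i,\ldots,i)\}$, then apply~\eqref{eq:MArelation} and pass to the limit via~\eqref{eq:rhoidef}. Your explicit identification of the preimage is slightly more detailed than the paper's presentation, but the argument is otherwise identical.
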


In words, the site-specific fixation probability $\rho_i$ equals the probability that founding individual $i$ becomes the eventual ancestor of the whole population.  

\begin{proof}
For any $i=1, \ldots, N$, define the set mapping $\gamma_i:\{1, \ldots, N\} \to \{\M,\R\}$ by
\[
\gamma_i(j) = \begin{cases} \M & \text{if $j=i$}\\
\R & \text{otherwise.}
\end{cases}
\]
(Intuitively, this mapping describes the case that individual $i$ of the founding generation is a mutant, and all others in the founding generation are residents.)  Note that $\tilde{\gamma}_i(\va_0) = \vm_i$.
Combining Eq.~\eqref{eq:MArelation} with the definition of $\rho_i$, we obtain
\begin{align}
\nonumber
\rho_i & = \lim_{t \to \infty} \Prob_{(\cM, \vm_i)} \, [ \vs(t) = (\M, \ldots, \M) ] \\
\nonumber
& = \lim_{t \to \infty} \Prob_{(\cA, \va_0)} \, [ \tilde{\gamma_i}(\va(t)) = (\M, \ldots, \M) ] \\
\label{eq:rhoialtdef}
& = \lim_{t \to \infty} \Prob_{(\cA, \va_0)} \, [ \va(t) = (i, \ldots, i) ]. \qedhere
\end{align}
\end{proof}

More generally, we can consider the probability of fixation from an arbitrary set of sites.  For any set $S \subset \{1, \ldots, N\}$, we let $\rho_S$ denote the probability that the mutant type becomes fixed, given the initial state $\vm_S$ with mutants occupying the sites specified by $S$ and residents occupying all other sites:
\begin{equation*}
\rho_S  = \lim_{t \to \infty} \Prob_{(\cM, \vm_S)} \, [ \vs(t) = (\M, \ldots, \M) ] 
\end{equation*}

Site-specific fixation probabilities are additive in the following sense:

\begin{theorem} \label{thm:rhoset}
  For any set $S \subset \{ 1, \ldots N \}$ of sites, 
\begin{equation}
\label{eq:rhoset}
\rho_S = \sum_{i \in S} \rho_i.
\end{equation}
In particular, 
\begin{equation}
\label{eq:rhosum}
\sum_{i=1}^N \rho_i = 1.
\end{equation}
\end{theorem}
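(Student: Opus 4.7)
My plan is to reduce Theorem~\ref{thm:rhoset} to the preceding Theorem~\ref{lem:rhoaltdef} by running the same ancestral-chain argument, but with a founder-to-type mapping tailored to the set $S$ rather than to a single site. Specifically, I will define $\gamma_S : \{1,\ldots,N\} \to \{\M,\R\}$ by $\gamma_S(i) = \M$ if $i \in S$ and $\gamma_S(i) = \R$ otherwise. Then $\tilde{\gamma}_S(\va_0) = \vm_S$, so Eq.~\eqref{eq:MArelation} gives
\[
\rho_S = \lim_{t\to\infty} \Prob_{(\cA,\va_0)}\bigl[\tilde{\gamma}_S(\va(t)) = (\M,\ldots,\M)\bigr],
\]
and the event on the right is precisely the event that every coordinate of $\va(t)$ lies in $S$.

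Next, I want to argue that with probability one, $\va(t)$ eventually becomes a constant vector $(i,\ldots,i)$ for some $i \in \{1,\ldots,N\}$. This is the step I expect to be the main obstacle, because it is the one place where I need to rule out the $\cA$-chain lingering in ``mixed'' ancestral states forever. The idea is to bootstrap from the known convergence of the evolutionary chain $\cM$: for each fixed $i$, apply Eq.~\eqref{eq:MArelation} to the singleton mapping $\gamma_{\{i\}}$, which shows that $\tilde{\gamma}_{\{i\}}(\va(t))$ has the distribution of $\cM$ started from $\vm_i$ and thus almost surely converges to $(\M,\ldots,\M)$ or $(\R,\ldots,\R)$. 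Consequently the probability that $\va(t)$ contains both an $i$-coordinate and a non-$i$ coordinate tends to zero. Taking a union bound over the finitely many $i$ gives that, almost surely, $\va(t)$ eventually takes one of the $N$ monoclonal values $(i,\ldots,i)$.

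Given this, the events $A_i = \{\lim_{t\to\infty} \va(t) = (i,\ldots,i)\}$ for $i=1,\ldots,N$ form a partition of the probability space (up to a null set), and by Theorem~\ref{lem:rhoaltdef} we have $\Prob_{(\cA,\va_0)}[A_i] = \rho_i$. The limiting event $\{\tilde{\gamma}_S(\va(t)) = (\M,\ldots,\M)\text{ eventually}\}$ is exactly $\bigcup_{i \in S} A_i$, a disjoint union, so
\[
\rho_S = \sum_{i \in S} \Prob_{(\cA,\va_0)}[A_i] = \sum_{i \in S} \rho_i,
\]
which is Eq.~\eqref{eq:rhoset}.

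Finally, for Eq.~\eqref{eq:rhosum} I will take $S = \{1,\ldots,N\}$: then $\vm_S = (\M,\ldots,\M)$ is already the mutant-fixation absorbing state, so trivially $\rho_S = 1$, and Eq.~\eqref{eq:rhoset} immediately yields $\sum_{i=1}^N \rho_i = 1$. (Equivalently, the almost-sure partition by the $A_i$ gives this directly.) The routine pieces are the bookkeeping with $\gamma_S$ and the disjointness of the $A_i$; the only genuinely substantive step is the almost-sure coalescence of $\va(t)$ to a monoclonal state, which is handled by the singleton-$\gamma$ trick above together with the convergence of $\cM$ to an absorbing state, itself a consequence of Assumption~\ref{fixationaxiom} applied within the finite state space.
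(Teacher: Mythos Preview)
Your proposal is correct and follows essentially the same route as the paper: define $\gamma_S$, push $\rho_S$ through Eq.~\eqref{eq:MArelation} to the ancestral chain, identify the event $\{\tilde{\gamma}_S(\va(t))=(\M,\ldots,\M)\}$ with $\{\va(t)=(i,\ldots,i)\text{ for some }i\in S\}$ in the limit, and then sum the disjoint pieces via Theorem~\ref{lem:rhoaltdef}. The only difference is that you make explicit the monoclonality step---that $\va(t)$ almost surely settles into some $(i,\ldots,i)$---which the paper passes over silently in its chain of equalities; your singleton-$\gamma$ argument (plus the fact that monoclonal states are absorbing, so the event is monotone in $t$) is a clean way to fill that gap.
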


This result has previously been obtained for a number of specific evolutioanry processes on graphs \cite{broom2010two,shakarian2013novel,maciejewski2014reproductive}.  Intuitively, the probability of fixation from the initial state described by $S$ equals the probability that one of the individuals in a site $i \in S$ becomes the eventual ancestor of the population.  Since this cannot be true of more than one site in $S$, the overall probability $\rho_S$ is obtained by summing over all $i \in S$ the individual probabilities $\rho_i$ that site $i$ contains the eventual ancestor.  

\begin{proof}
Suppose that we are given a set $S \subset \{1, \ldots, N\}$ of sites initially occupied by mutants.  This situation is described by the set mapping $\gamma_S:\{1, \ldots, N\} \to \{\M,\R\}$, given by
\[
\gamma_S(i) = \begin{cases} \M & \text{if $i \in S$}\\
\R & \text{otherwise.}
\end{cases}
\]
Invoking the relationship between $\cA$ and $\cM$---in particular, Eq.~\eqref{eq:MArelation} and Theorem \ref{lem:rhoaltdef}---we have
\begin{align*}
\rho_S & = \lim_{t \to \infty} \Prob_{(\cM, \tilde{\gamma_S}(\va_0))} \, [ \vs(t) = (\M, \ldots, \M) ] \\
& = \lim_{t \to \infty} \Prob_{(\cA, \va_0)} \, [ \tilde{\gamma_S}(\va(t)) = (\M, \ldots, \M) ] \\
& = \lim_{t \to \infty} \Prob_{(\cA, \va_0)} \, [ \text{$\va(t) = (i, \ldots, i)$ for some $i \in S$}]\\
& = \sum_{i \in S} \lim_{t \to \infty} \Prob_{(\cA, \va_0)} \, [ \va(t) = (i, \ldots, i)]\\
& = \sum_{i \in S} \rho_i.
\end{align*}
This proves Eq.~\eqref{eq:rhoset}.  Eq.~\eqref{eq:rhosum} now follows from letting $S=\{1, \ldots, N\}$, and noting that, in this case,
\[
\rho_S = \lim_{t \to \infty} \Prob_{(\cM, (\M, \ldots, \M))} \, [ \vs(t) = (\M, \ldots, \M) ] = 1. \qedhere
\]
\end{proof}

We now derive Eq.~\eqref{eq:rhorecur}, which allows the fixation probabilities $\rho_i$ to be calculated from the replacement rates $e_{ij}$.  

\begin{theorem}
For each $i = 1, \ldots, N$,
\[
d_i \rho_i =\sum_{j=1}^N e_{ij} \rho_j.
\]
\end{theorem}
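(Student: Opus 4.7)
The plan is to derive the recursion by conditioning on the outcome of a single step of the evolutionary Markov chain $\cM$ started from $\vm_i$, and then invoking Theorem \ref{thm:rhoset} to re-express the resulting fixation probabilities as sums over sites. Since the process is Markov and the replacement rule is state-independent, the fixation probability $\rho_i$ equals the expectation, over a single replacement event $(R,\alpha)$, of the fixation probability from the resulting new state $\vm_{S(R,\alpha)}$, where $S(R,\alpha)$ is the set of sites occupied by mutants after the event.

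Next I would pin down the set $S(R,\alpha)$. Starting from $\vm_i$, the only initial mutant sits at site $i$; after the event, site $i$ remains mutant iff $i \notin R$, and each site $j \in \alpha^{-1}(i)$ is freshly occupied by an offspring of the mutant at $i$ and so becomes mutant. All other sites are either unchanged residents (if outside $R$) or replaced by offspring of residents (if in $R$ with $\alpha(j) \neq i$). Hence
\[
S(R,\alpha) \;=\; \bigl(\{i\} \setminus R\bigr) \cup \alpha^{-1}(i).
\]
Importantly, these two sets are disjoint: if $i \notin R$ then $\alpha$ is not defined at $i$, so $i \notin \alpha^{-1}(i)$. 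Applying Theorem \ref{thm:rhoset} gives $\rho_{S(R,\alpha)} = \mathbf{1}_{i \notin R}\,\rho_i + \sum_{j \in \alpha^{-1}(i)} \rho_j$.

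I would then substitute this into the one-step conditioning identity and split the sum on $(R,\alpha)$ into two pieces. The first piece collects the $\rho_i$ contribution and yields
\[
\rho_i \sum_{(R,\alpha):\, i \notin R} p(R,\alpha) \;=\; (1-d_i)\,\rho_i,
\]
by the definition of $d_i$. The second piece, after interchanging the order of summation over $(R,\alpha)$ and over $j \in \alpha^{-1}(i)$, becomes
\[
\sum_{j=1}^{N} \rho_j \sum_{\substack{(R,\alpha)\\ j \in R,\; \alpha(j)=i}} p(R,\alpha) \;=\; \sum_{j=1}^{N} e_{ij}\,\rho_j,
\]
by the definition of $e_{ij}$. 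Equating $\rho_i$ to the sum of these two pieces and moving the $(1-d_i)\rho_i$ term to the left yields the claimed identity $d_i \rho_i = \sum_{j=1}^N e_{ij} \rho_j$.

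The only delicate point is the careful accounting of $S(R,\alpha)$: one must notice that $i$ can belong to $\alpha^{-1}(i)$ only when $i \in R$ (self-replacement), which is precisely what makes $\{i\}\setminus R$ and $\alpha^{-1}(i)$ disjoint and prevents any double-counting of $\rho_i$ when applying Theorem \ref{thm:rhoset}. Once this bookkeeping is correct, the remainder is a routine rearrangement using the definitions of $b_i$, $d_i$, and $e_{ij}$ given earlier.
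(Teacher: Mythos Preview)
Your proof is correct and follows essentially the same approach as the paper: condition on a single replacement event, identify the set of sites carrying the lineage of $i$ after that event as $(\{i\}\setminus R)\cup\alpha^{-1}(i)$, invoke Theorem~\ref{thm:rhoset}, and collapse the resulting sums via the definitions of $d_i$ and $e_{ij}$. The paper states the one-step recurrence $\rho_i=(1-d_i)\rho_i+\sum_{(R,\alpha)}p(R,\alpha)\,\rho_{\alpha^{-1}(i)}$ directly and justifies it in words using the ancestral interpretation, whereas you derive it more explicitly from the mutant/resident Markov chain and spell out the disjointness needed to apply Theorem~\ref{thm:rhoset} without double-counting; the algebra from that point on is identical.
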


\begin{proof}
Considering the change that can occur over a single time-step, we have the following recurrence relation:
\[
\rho_i = (1-d_i) \rho_i + \sum_{(R,\alpha)} p(R, \alpha) \, \rho_{\alpha^{-1}(i)}.
\]
The first term above represents the case that the occupant of $i$ survives the current time-step and becomes the eventual ancestor of the population, while the second term represents the case that one of $i$'s offspring from the current time-step is the eventual ancestor.  Subtracting $(1-d_i) \rho_i$ from both sides and applying Theorem \ref{thm:rhoset} with $S=\rho_{\alpha^{-1}(i)}$ yields
\[
d_i \rho_i =  \sum_{(R, \alpha)} p(R, \alpha) \left( \sum_{j \in \alpha^{-1}(i)}  \rho_j \right).
\]
Now interchanging the summations on the right-hand side yields
\[
d_i \rho_i = \sum_{j=1}^N \rho_j 
\left ( \sum_{\substack{(R,\alpha)\\j \in R \\ \alpha(j)=i}} p(R, \alpha) \right ).
\]
By definition, 
\[
\sum_{\substack{(R,\alpha)\\j \in R \\ \alpha(j)=i}} p(R, \alpha) = e_{ij}. 
\]
This completes the proof.
\end{proof}

\subsection*{Proof of Result 2}

\begin{theorem}[Result 2]
The fixation probabilities from each site are equal to $1/N$ ($\rho_i=1/N$ for all $i=1, \ldots, N$) if and only if each site has birth rate equal to death rate ($b_i = d_i$ for all $i=1, \ldots, N$). 
\end{theorem}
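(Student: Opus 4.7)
The plan is to exploit the system \eqref{eq:rhorecur} together with the fact, stated earlier in the paper, that this system has a unique solution. Both directions of the biconditional will follow from a single substitution of the candidate constant profile $\rho_i = 1/N$ into the recurrence $d_i \rho_i = \sum_{j=1}^N e_{ij}\rho_j$.

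For the ``if'' direction, I assume $b_i = d_i$ for every $i$ and check that the constant profile $\rho_i = 1/N$ solves \eqref{eq:rhorecur}. The normalization condition $\sum_i \rho_i = 1$ holds trivially since $N \cdot (1/N) = 1$. Plugging $\rho_j = 1/N$ into the right-hand side of the recurrence gives $\frac{1}{N}\sum_{j=1}^N e_{ij} = b_i/N$, while the left-hand side equals $d_i/N$; under the hypothesis $b_i = d_i$, these are equal. Since the system \eqref{eq:rhorecur} admits a unique solution, the actual fixation probabilities must be $\rho_i = 1/N$ for all $i$, which then yields $K = u$ via \eqref{eq:Kdef}.

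For the ``only if'' direction, I assume $\rho_i = 1/N$ for all $i$ and read off a constraint on the rates. The recurrence immediately gives $d_i \cdot (1/N) = \sum_{j=1}^N e_{ij} \cdot (1/N) = b_i/N$, and multiplying by $N$ produces $d_i = b_i$ for every $i$. So the balanced condition $b_i = d_i$ is forced.

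The main obstacle is essentially nonexistent; the proof is a direct substitution, and the only subtlety is correctly invoking uniqueness for the ``if'' direction. I would remark at the end that Result 2 also follows as a corollary of the Circulation Theorem of Lieberman et al.~\cite{Lieberman2005Graphs}, by viewing the condition $b_i = d_i$ as saying that the flow of offspring at each node is conserved, but the direct proof above is self-contained and avoids invoking additional machinery.
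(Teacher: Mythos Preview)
Your proof is correct and essentially identical to the paper's own argument: both directions are obtained by substituting the constant profile $\rho_i=1/N$ into the recurrence $d_i\rho_i=\sum_j e_{ij}\rho_j$, with uniqueness of the solution (guaranteed by Assumption~1) invoked for the ``if'' direction. The paper also notes the connection to the Circulation Theorem of Lieberman et al.\ as an alternative route, just as you do.
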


\begin{proof}
Assume first that  the fixation probabilities from each site are all equal to $1/N$.  Substituting $\rho_i=1/N$ for all $i$ into Eq.~\eqref{eq:rhorecur} yields $d_i = b_i$.  This proves the ``only if" direction.

Next assume that the birth rate is equal to the death rate at each site  ($b_i = d_i$ for all $i=1, \ldots, N$).   Eq.~\eqref{eq:rhorecur} can then be rewritten as 
 \begin{equation*}
\sum_{j=1}^N e_{ij}\rho_{i} =\sum_{j=1}^N e_{ij}\rho_{j}.
\end{equation*}
Clearly, $\rho_i=1/N$ for all $i$ satisfies the above equation for all $i$, and also satisfies $\sum_{i=1}^N \rho_i = 1$.   Assumption 1 guarantees that the solution to these equations is unique.  Therefore $b_i=d_i$ for all $i$ implies $\rho_i=1/N$ for all $i$, proving the ``if" direction.
\end{proof}

\subsection*{Proof of Result 3}

\begin{theorem}[Result 3]
If birth rates are constant over sites, $b_i=1/N$ for all $i=1,\ldots,N$,
then $\rho \leq 1/N$ (and consequently $K \leq u$) with equality if and only if the death rates are also constant over sites.  
\end{theorem}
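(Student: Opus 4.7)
The plan is to reformulate the statement as follows. Since $\rho = (1/B)\sum_i d_i \rho_i$ and $\sum_i d_i = B$, the inequality $\rho \leq 1/N$ is equivalent to $\sum_i d_i \rho_i \leq B/N$; subtracting $(B/N)\sum_i \rho_i = B/N$ from both sides yields the more suggestive form $\sum_i (d_i - B/N)\rho_i \leq 0$. This encodes the heuristic that $\rho_i$ should be negatively correlated with $d_i$: sites that turn over quickly ought to have smaller reproductive value. My goal is then to make this negative correlation quantitative.

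I would exploit the assumption $b_i = B/N$ through two complementary identities. First, multiplying \eqref{eq:rhorecur} by $\rho_i$ and summing gives $\sum_i d_i \rho_i^2 = \sum_{i,j}e_{ij}\rho_i\rho_j$, and combining this with the Dirichlet nonnegativity $\sum_{i,j}e_{ij}(\rho_i - \rho_j)^2 \geq 0$ (expanded using the definitions of $b_i$ and $d_i$) yields $\sum_i d_i\rho_i^2 \leq \sum_i b_i\rho_i^2 = (B/N)\sum_i\rho_i^2$. Second, since $b_i = B/N$ makes $P_{ij} := N e_{ij}/B$ row-stochastic, the equation \eqref{eq:rhorecur} rewrites as $(N/B)d_i \rho_i = \sum_j P_{ij}\rho_j$; Jensen's inequality applied to $x \mapsto x^2$ gives $(d_i\rho_i)^2 \leq (B/N)\sum_j e_{ij}\rho_j^2$, which on summation yields $\sum_i d_i^2\rho_i^2 \leq (B/N)\sum_j d_j\rho_j^2$.

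The hard part --- the reason the paper flags the argument as considerably more intricate than the intuition suggests --- is bridging these $L^2$-type bounds to the desired $L^1$-type bound $\sum_i d_i \rho_i \leq B/N$; a direct Cauchy--Schwarz closure yields only $\rho \leq 1/\sqrt{N}$, which is too weak. I expect the correct closure comes from invoking the Matrix-Tree representation $\rho_i = w_i/\sum_j w_j$, where $w_i$ is the weighted sum of spanning out-arborescences rooted at $i$ (with each edge $j\to k$ contributing weight $e_{jk}$), and then establishing the combinatorial inequality $N\sum_i d_i w_i \leq B \sum_i w_i$. On the left, $\sum_i d_i w_i$ enumerates functional digraphs (arborescence plus one incoming edge to the root) weighted by cycle length, while on the right $B\sum_i w_i$ enumerates the larger class of arborescence-plus-arbitrary-edge structures; comparing the two by a counting/coupling argument that uses $b_i = B/N$ should furnish the inequality with strict inequality whenever some $d_i \neq B/N$.

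For the equality case, I would argue as follows. Equality in the Dirichlet inequality forces $\rho_i = \rho_j$ whenever $e_{ij} > 0$, and by Assumption~1 (which ensures irreducibility in the relevant sense, so that the ancestor site can be any $i$) this propagates to $\rho_i = 1/N$ for all $i$. Substituting back into \eqref{eq:rhorecur} gives $d_i/N = b_i/N = (B/N)/N$, forcing $d_i = B/N$ for all $i$. Conversely, if the death rates are constant, Result~1 already yields $K = u$, establishing the ``if and only if'' in the equality case.
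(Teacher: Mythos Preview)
Your reformulation $\sum_i(d_i - B/N)\rho_i \le 0$ is correct, and the two $L^2$-type inequalities you derive (the Dirichlet bound $\sum_i d_i\rho_i^2 \le (B/N)\sum_i\rho_i^2$ and the Jensen bound $\sum_i d_i^2\rho_i^2 \le (B/N)\sum_j d_j\rho_j^2$) are both valid. But, as you explicitly concede, they do not close to give the first-moment inequality you need; the passage from these quadratic bounds to the linear bound is precisely the substance of the theorem, and you have not supplied it. Your proposed resolution via the Matrix-Tree theorem is only a sketch: the relation $d_i\rho_i = \sum_j e_{ij}\rho_j$ does not present $\rho$ as the stationary vector of a stochastic matrix (your $P_{ij} = Ne_{ij}/B$ is row-stochastic, but $\rho$ satisfies $(P\rho)_i = (N/B)d_i\rho_i$, not $P\rho = \rho$), so the arborescence-weight formula $\rho_i = w_i/\sum_j w_j$ requires justification before it can be invoked. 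More seriously, the combinatorial inequality $N\sum_i d_i w_i \le B\sum_i w_i$ is simply asserted; you offer an interpretation of each side but no mechanism (bijection, injection, or sign argument) that would establish it, nor any indication of where the hypothesis $b_i = B/N$ enters that comparison. As written, this is a plan with the decisive step missing.

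For comparison, the paper's argument is quite different in character: it treats $\rho$ as a smooth function of the independent variables $\{e_{ij}\}_{i\ne j}$ subject to the constraints $\sum_j e_{ij} = 1/N$ and the defining relations for the $\rho_i$, uses Lagrange multipliers to show that the constrained critical points are exactly those with $d_i = 1/N$ for all $i$ (at which $\rho = 1/N$), and then computes the second-order Taylor term of $\rho$ about any such critical point, obtaining
\[
\rho^{(2)} = -\frac{N}{2}\sum_{i,j} e_{ij}^*\bigl(\rho_i^{(1)} - \rho_j^{(1)}\bigr)^2 \le 0.
\]
Interestingly, this Dirichlet-form expression is close kin to your first $L^2$ inequality, so the two approaches are not as far apart as they look; the paper's route trades your attempted global combinatorial bound for a local variational analysis. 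Your equality-case discussion is also not yet tied to the main inequality: equality in the Dirichlet bound characterizes when all $\rho_i$ coincide, but since your proof of $\rho \le 1/N$ is not via that bound, you would need to argue separately that $\rho = 1/N$ forces equality there.
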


\begin{proof} We separate our proof into five steps.  First, we show that there is no loss of generality in assuming that $B=1$.  Second, we use the method of Lagrange multipliers to find the critical points of the function $\rho = \sum_{i=1}^N d_i \rho_i$ with respect to the variables $\{e_{ij}\}_{i,j=1}^N$ and $\{\rho_i\}_{i=1}^N$ and the constraints
\begin{equation}
\begin{cases}
\displaystyle \sum_{j=1}^N e_{ij}  = \frac{1}{N} & i \in \{1,\ldots, N\}, \\
\displaystyle \left(\sum_{j=1}^N e_{ji} \right) \rho_i  = \sum_{j=1}^N e_{ij} \rho_j & i \in \{1,\ldots, N\}, \\
\displaystyle \sum_{i=1}^N \rho_i  = 1.
\end{cases}
\end{equation}
We obtain that the critical points are precisely those for which $d_i = 1/N$ for all $i$.  In the next three steps, we use partial derivatives of Eqs.~\eqref{eq:rhorecur} and \eqref{eq:rhodef} to form a second-order Taylor expansion of $\rho$ in the variables $\{e_{ij}\}_{ i \neq j}$ around the critical points.  We show that the second-order part of this expansion is $\leq 0$ with equality only at the critical points, completing the proof.

\subsubsection*{Step 1: Normalize the expected number of offspring per time-step}

We first show that there is no loss of generality in assuming that the total expected number of offspring per time-step, $B \equiv \sum_{i,j} e_{ij}$, is one.  We will demonstrate this by showing that, if a maximum of $\rho$ is achieved by a combination of replacement rates $\{e_{ij}^*\}_{i,j}$, this maximum is also achieved using the normalized rates $\{e_{ij}'\}_{i,j}$ with $e_{ij}'= e_{ij}^*/B^*$, $B^* \equiv \sum_{i,j} e_{ij}$.  

First, we see that the equations
\[
d_i^* \rho_i = \sum_{j=1}^N e_{ij}^* \rho_j \qquad \text{and} 
\qquad d_i' \rho_i = \sum_{j=1}^N e_{ij}' \rho_j
\]
are equivalent, differing only by a factor of $B$.  (Above, we have introduced the notation $d_i^* = \sum_{j=1}^N e_{ij}^*$ and $d_i' = \sum_{j=1}^N e_{ij}'$.)  Thus the node-specific fixation probabilities $\{ \rho_i \}_{i=1}^N$ resulting from $\{e_{ij}'\}_{i,j}$ are the same as those resulting from $\{e_{ij}^*\}_{i,j}$.

Turning now to the overall fixation probabilities $\rho^*$ and $\rho'$ corresponding to $\{e_{ij}^*\}_{i,j}$ and $\{e_{ij}'\}_{i,j}$ respectively, we see that 
\[
\rho^* = \frac{1}{B^*} \sum_{i=1}^N d_i^* \rho_i =  \sum_{i=1}^N \frac{d_i^*}{B^*} \rho_i 
= \sum_{i=1}^N d_i' \rho_i = \rho'.
\]
Thus the same maximum is achieved by the normalized replacement rates $\{e_{ij}'\}_{i,j}$, which satisfy $\sum_{i,j} e_{ij}'=1$.  We conclude that there is no loss of generality in assuming $B=1$.  Together with uniform birth rates, this implies the constraint
\begin{equation}
\label{eq:birth_const}
\sum_{j=1}^N e_{ij} = \frac{1}{N}.
\end{equation}

\subsubsection*{Step 2: Determine critical points}
We use the method of Lagrange multipliers to find critical points of the function $\rho = \sum_{i=1}^N d_i \rho_i$ with respect to the variables $\{e_{ij}\}_{i,j=1}^N$ and $\{\rho_i\}_{i=1}^N$.  Eqs.~\eqref{eq:birth_const}, \eqref{eq:rhorecur}, and \eqref{eq:rhosum} give the respective constraint equations
\begin{align*}
g_{i}&\equiv \sum_{j=1}^{N}e_{ij}-\frac{1}{N}=0,\ \text{for}\, i \in \{1,\ldots, N\}\\
h_{i}&\equiv \rho_{i}\left(\sum_{j=1}^{N}e_{ji}\right)-\sum_{i=1}^{N}e_{ij}\rho_{j}=0,\ \text{for}\, i \in \{1,\ldots, N\}\\
q&\equiv \sum_{i=1}^{N}\rho_{i}-1=0
\end{align*}
Critical points with respect to the above constraints are solutions to
\begin{equation*}
\label{eq:lagrange}
\nabla{\rho}=\sum_{i=1}^{N}\lambda_{i}\nabla{g_{i}} + \sum_{i=1}^{N}\mu_{i}\nabla{h_{i}} + \sigma\nabla{q},
\end{equation*}
where the $\lambda_{i}$'s, $\mu_{i}$'s, and $\sigma$ are the Lagrange multipliers.  The individual components of the gradient are obtained by taking the partial derivative with respect to $e_{kl}$, yielding  
\begin{equation}
\rho_{l}=\lambda_{k}+\rho_{l}\mu_{l}-\rho_{l}\mu_{k},
\label{eq:lagrange_partialA}
\end{equation}
for $k,l\in\{1,\ldots,N\}$, and the partial derivative with respect to $\rho_k$, yielding
\begin{equation}
\sum_{j=1}^{N}e_{jk} = \mu_{k}\sum_{j=1}^{N}e_{jk}-\sum_{j=1}^{N}\mu_{j}e_{jk}+\sigma,
\label{eq:lagrange_partialB}
\end{equation}
for $k=1, \ldots, N$.  Solving Eq.~\eqref{eq:lagrange_partialA} for $\rho_l$ gives
\begin{equation}
\rho_{l}=\frac{\lambda_{k}}{1-\mu_{l}+\mu_{k}}.\label{eq:rhol_with_lambda}
\end{equation}
Note that in the case $l=k$, Eq.~\eqref{eq:rhol_with_lambda} becomes $\rho_k=\lambda_k$.   Therefore, we replace $\lambda_k$ with $\rho_k$   in Eq.~\eqref{eq:rhol_with_lambda}:
\begin{equation}
\rho_{l}=\frac{\rho_{k}}{1-\mu_{l}+\mu_{k}}\label{eq:rhol}.
\end{equation}
Interchanging $k$ and $l$, we obtain
\begin{equation}
\rho_{k}=\frac{\rho_{l}}{1-\mu_{k}+\mu_{l}}.\label{eq:rhok}
\end{equation}

Combining Eqs.~\eqref{eq:rhol} and \eqref{eq:rhok} yields $\mu_l = \mu_k$ and $\rho_l=\rho_k$ for all $k,l\in\{1,\ldots,N\}$.  That is, all node-specific probabilities $\rho_i$ are equal.  It follows from   Eq.~\eqref{eq:rhosum} that $\rho_i=1/N$ for all  $i=1,\ldots,N$.
Furthermore, given that $\mu_l = \mu_k$, Eq.~\eqref{eq:lagrange_partialB} yields $d_k=\sigma$ for all $k=1,\ldots,N$.  That is, all death rates $d_i$ are equal. Therefore $d_i = 1/N$ for $i = 1,\ldots,N$  because we have assumed (without loss of generality) that $\sum_{i=1}^N d_i = 1$.

In summary, we have shown that the overall fixation probability $\rho$ has a critical point whenever all node-specific fixation probabilities and death rates are constant ($d_i = \rho_i = 1/N$ for all $i=1,\ldots,N$).  Consequently, $\rho= 1/N$ at all critical points. It still remains to prove that this critical value of $\rho$ is a maximum.

\subsubsection*{Step 3: Taylor series expansion}
To prove that the overall fixation probability has an absolute maximum of $1/N$, we pick a critical point $\{e_{ij}^*\}_{ i, j}$.  Then, viewing $\rho$ as a function of the independent variables $\{e_{ij}\}_{\substack{ i, j\\ i \neq j}}$, we form a second-order Taylor expansion of $\rho$ around the critical point.  (Since we are operating under the constraint $\sum_{j=1}^N e_{ij} = 1/N$ for all $i$, this set of variables suffices to determine the value of $\rho$.)  We will show that the second-order term of this Taylor expansion is always less than or equal to zero.  

The second order multivariate Taylor series expansion for $\rho$ about the critical point $\{e_{ij}^*\}_{\substack{ i, j\\ i \neq j}}$  can be written as
\begin{equation}
\label{eq:Taylor}
\rho = \frac{1}{N}+\sum_{\substack{k,l\\k\neq l}}\frac{\partial{\rho}}{\partial{e_{kl}}}\Delta{e_{kl}}
+\frac{1}{2}\sum_{\substack{k,l,m,n\\m\neq n\\k\neq l}}\frac{\partial^{2}{\rho}}{\partial{e_{kl}\partial{e_{mn}}}}\Delta{e_{kl}}\Delta{e_{mn}}
+\mathcal{O}\left(|\Delta\vec{e}|^{3}\right),
\end{equation}
where all derivatives are taken at $\{e_{ij}^*\}_{\substack{ i, j\\ i \neq j}}$ and $\Delta e_{kl} = e_{kl} - e_{kl}^*$.  More simply, we write this expansion as  
\begin{equation*}
\rho=\frac{1}{N} + \rho^{(1)} + \rho^{(2)} + \mathcal{O}\left(|\Delta\vec{e}|^{3}\right),
\end{equation*}
with $\rho^{(1)}$ and $\rho^{(2)}$ representing the first- and second-order terms, respectively, in Eq.~\eqref{eq:Taylor}.
The first-order term $\rho^{(1)}$ is zero since $\{e_{ij}^*\}_{\substack{ i, j\\ i \neq j}}$ is a critical point.   Our goal will be to show that the second-order term, $\rho^{(2)}$ is always negative or zero.

We can find an alternative expression for $\rho^{(2)}$ using the definition of $\rho$ in Eq.~\eqref{eq:rhodef} as follows.
We introduce the notation $\Delta d_i = d_i - d_i^*= d_i -1/N$.  With this notation, Eq.~\eqref{eq:rhodef} becomes
\begin{align}
\nonumber
\rho&=\frac{1}{N}\sum_{i=1}^N  \rho_i + \sum_{i=1}^N \Delta d_i \rho_i\\
\label{eq:rhodeltad}
&= \frac{1}{N}+\sum_{i=1}^N \Delta d_i \rho_i.
\end{align} 
We substitute the first-order Taylor series expansion for  $\rho_i$,
\begin{equation*}
\rho_i=\frac{1}{N} + \rho_i^{(1)} + \mathcal{O}\left(|\Delta\vec{e}|^{2}\right),
\end{equation*}
into Eq.~\eqref{eq:rhodeltad}, noting that $\Delta d_i = \mathcal{O}\left(|\Delta\vec{e}|\right)$ and $\sum_{i=1}^N \Delta d_i=0$. This yields
\begin{align*}
\rho &= \frac{1}{N}+\sum_{i=1}^N \Delta d_i\left(\frac{1}{N} + \rho_i^{(1)} + \mathcal{O}\left(|\Delta\vec{e}|^{2}\right)\right)\\
&= \frac{1}{N} + \sum_{i=1}^N \Delta d_i\rho_i^{(1)}  + \mathcal{O}\left(|\Delta\vec{e}|^{3}\right).
\end{align*}
Thus, the second-order term of the overall fixation probility can be written as
\begin{equation}
\label{eq:rho2}
\rho^{(2)}=\sum_{i=1}^N \Delta d_i \rho_{i}^{(1)}.
\end{equation}

\subsubsection*{Step 4: Determine first-order term of the site-specific fixation probability}
We now investigate the properties of $\rho_i^{(1)}$.  We begin by rewriting Eq.~\eqref{eq:rhorecur} as 
\begin{equation*}
d_{i}\rho_{i}=e_{ii}\rho_{i}+\sum_{j \neq i}e_{ij}\rho_{j}.
\end{equation*}

Replacing $e_{ii}$ with $\frac{1}{N} - \sum_{j \neq i} e_{ij}$ and simplifying, we obtain
\begin{equation*}
\Delta d_i \rho_i =\sum_{j \neq i}e_{ij}\left(\rho_{j}-\rho_{i}\right).
\end{equation*}

Next we take the partial derivative of both sides with respect to $e_{kl}$, where $k\ne l$,
\begin{equation*}
\rho_{i}\frac{\partial (\Delta d_i)}{\partial{e_{kl}}}+\Delta d_i\frac{\partial{\rho_i}}{\partial{e_{kl}}}=\sum_{\substack{i,j\\j \neq i}}\frac{\partial{e_{ij}}}{\partial{e_{kl}}}\left(\rho_{j}-\rho_{i}\right)+\sum_{\substack{i,j\\j \neq i}}e_{ij}\left(\frac{\partial{\rho_{j}}}{\partial{e_{kl}}}-\frac{\partial{\rho_{i}}}{\partial{e_{kl}}}\right),
\end{equation*}
and evaluate at the critical point ($\rho_i^* = \frac{1}{N}$ and $\Delta d_i=0$):
\begin{equation*}
\frac{1}{N}\frac{\partial (\Delta d_i)}{\partial{e_{kl}}}=\sum_{j \neq i}e_{ij}^* \left(\frac{\partial{\rho_{j}}}{\partial{e_{kl}}}-\frac{\partial{\rho_{i}}}{\partial{e_{kl}}}\right).
\end{equation*}
Multiplying both sides by $\Delta e_{kl} $ and then summing over all $k, l\in\{1,\ldots,N\}$ for $k\ne l$ yields
\begin{equation}
\label{eq:rhorecur_sum}
\frac{1}{N}\sum_{\substack{k,l\\k \neq l}}\frac{\partial (\Delta d_i)}{\partial{e_{kl}}}\Delta{e_{kl}}=\sum_{\substack{j,k,l\\k \neq l\\j \neq i}}e_{ij}^*\frac{\partial{\rho_{j}}}{\partial{e_{kl}}}\Delta{e_{kl}}-\sum_{\substack{j,k,l\\k \neq l\\j \neq i}}e_{ij}^*\frac{\partial{\rho_{i}}}{\partial{e_{kl}}}\Delta{e_{kl}}.
\end{equation} 

We observe that $\rho_{i}^{(1)}=\sum_{\substack{k,l\\k\neq l}}\frac{\partial{\rho_{i}}}{\partial{e_{kl}}}\Delta{e_{kl}}$  is the first order term of the Taylor series expansion of $\rho_i$ about the critical point, and  $\sum_{\substack{k,l\\k \neq l}}\frac{\partial (\Delta d_i)}{\partial{e_{kl}}}\Delta{e_{kl}}= \Delta d_i$, since $\Delta d_i$ is a linear function of the $e_{kl}$.  Thus Eq.~\eqref{eq:rhorecur_sum} becomes
\begin{equation*}
\frac{1}{N}\Delta d_i =\sum_{j \neq i}e_{ij}^* \rho_{j}^{(1)}-\sum_{j \neq i}e_{ij}^* \rho_{i}^{(1)}.
\end{equation*}
The restriction $j \neq i$ in the sums on the right-hand side can be removed, since the entire right-hand side is zero when $j=i$.  Rearranging further, we have
\begin{equation*}
N\rho_{i}^{(1)} \sum_{j=1}^N e^*_{ij}=-\Delta d_i+N\sum_{j=1}^N e^*_{ij}\rho_{j}^{(1)}.
\end{equation*}
Since the birth rate is uniform over sites $(b_i=\sum_{j=1}^N e_{ij}^* = \frac{1}{N}$ for all $i$), the above equation simplifies to
\begin{equation}
\label{eq:rhoi_first}
\rho_{i}^{(1)}=-\Delta d_i+N\sum_{j=1}^N e_{ij}^* \rho_{j}^{(1)}.
\end{equation}
This equation provides a recurrence relation for the first-order term of the Taylor expansion of $\rho_i$ about the critical point.  

\subsubsection*{Step 5: Determine second-order term of the overall fixation probability}
Finally, we combine Eq.~\eqref{eq:rhoi_first} with Eq.~\eqref{eq:rho2} to show that $\rho^{(2)} \leq 0$.  The process is as follows.  We first multiply both sides of Eq.~\eqref{eq:rhoi_first} by $\rho_i^{(1)}$ and sum over $i=1, \ldots, N$:
\begin{equation*}
\sum_{i=1}^N \left(\rho_{i}^{(1)}\right)^{2}=-\sum_{i=1}^N \Delta d_i\rho_{i}^{(1)}
+ N\sum_{i,j}e_{ij}^* \rho_{i}^{(1)}\rho_{j}^{(1)}.
\end{equation*}
By Eq.~\eqref{eq:rho2} we can substitute $\rho^{(2)}$ for $\sum_{i=1}^N \Delta d_i \rho_{i}^{(1)}$.  Then, solving for $\rho^{(2)}$, we obtain
\begin{eqnarray}
\rho^{(2)}=-\sum_{i=1}^N \left(\rho_{i}^{(1)}\right)^{2}+N\sum_{i,j}e_{ij}^* \rho_{i}^{(1)}\rho_{j}^{(1)}.\label{eq:rho_secondA}
\end{eqnarray}
We now make use of the fact that the product $\rho_{i}^{(1)}\rho_{j}^{(1)}$ can be written as a difference of squares:
\begin{equation*}
\rho_{i}^{(1)}\rho_{j}^{(1)}=
-\frac{1}{2}\left[\left(\rho_{i}^{(1)}-\rho_{j}^{(1)}\right)^{2}
-\left(\rho_{i}^{(1)}\right)^{2}-\left(\rho_{j}^{(1)}\right)^{2}\right].
\label{eq:diff_squares}
\end{equation*}
Substituting this identity into Eq.~\eqref{eq:rho_secondA} yields
\begin{align*}
\rho^{(2)}& = -\sum_{i=1}^N \left(\rho_{i}^{(1)}\right)^{2}-\frac{N}{2}\sum_{i,j}e_{ij}^*\left(\rho_{i}^{(1)}-\rho_{j}^{(1)}\right)^{2}+\frac{N}{2}\sum_{i,j}e_{ij}^*\left(\rho_{i}^{(1)}\right)^{2}\\
& \qquad +\frac{N}{2}\sum_{i,j}e_{ij}^*\left(\rho_{j}^{(1)}\right)^{2}\\
&= -\sum_{i=1}^N \left(\rho_{i}^{(1)}\right)^{2}-\frac{N}{2}\sum_{i,j}e_{ij}^*\left(\rho_{i}^{(1)}-\rho_{j}^{(1)}\right)^{2}+\frac{N}{2}\sum_{i=1}^N \left(\rho_{i}^{(1)}\right)^{2} \sum_{j=1}^N e_{ij}^*\\
& \qquad +\frac{N}{2}\sum_{j=1}^N \left(\rho_{j}^{(1)}\right)^{2} \sum_{i=1}^N e_{ij}^*.
\end{align*}
Uniform birth rates implies that $\sum_{j=1}^N e^*_{ij} = 1/N$ for each $i$.  Furthermore, since $\{ e_{ij}^*\}_{i,j}$ is a critical point, death rates are uniform as well; thus  $\sum_{i=1}^N  e^*_{ij} = 1/N$ for each $j$.  Making these substitutions yields
\begin{align*}
\rho^{(2)}& = -\sum_{i=1}^N \left(\rho_{i}^{(1)}\right)^{2}
-\frac{N}{2}\sum_{i,j} e_{ij}^* \left(\rho_{i}^{(1)}-\rho_{j}^{(1)}\right)^{2}
+\frac{1}{2}\sum_{i=1}^N \left(\rho_{i}^{(1)}\right)^{2}
+\frac{1}{2}\sum_{j=1}^N \left(\rho_{j}^{(1)}\right)^{2}\\
& = -\frac{N}{2} \sum_{i,j} e_{ij}^* \left(\rho_{i}^{(1)}-\rho_{j}^{(1)}\right)^{2}.
\end{align*}
We conclude that $\rho^{(2)}$ is always less than or equal to $0$.  

Overall, we have shown that the critical points of $\rho$, subject to the constraint $b_i = 1/N$ for all $i$, are precisely those for which $d_i = 1/N$ for all $i$, and that $\rho$ achieves its maximum value of $1/N$ at these points.
\end{proof}

\subsection*{Result 4}

We now turn to the full range of possible values for $K$ with no constraints on birth and death rates.  Consider the example spatial structure illustrated in Figure \ref{fig:noconstraint}, which consists of a hub with outgoing edges to $n$ leaves, so that the population size is $N=n+1$.   There is an edge of weight $a$, $0 \leq a < 1/(N-1)$, from the hub to each leaf. The hub also has a self-loop of weight $1-(N-1)a$, so that $B=1$ births are expected per time-step. The death rates are $d_\mathrm{H} = 1-(N-1)a$ for the hub and $d_\mathrm{L} =a$ for each leaf. Solving  Eq.~\eqref{eq:rhorecur} we obtain the site-specific fixation probabilities $\rho_\mathrm{L} = 0$ for each leaf and $\rho_\mathrm{H}=1$ for the hub. By Eq.~\eqref{eq:rhodef}, the overall fixation probability is equal to the rate of turnover at the hub:
 \begin{equation*}
\rho = d_\mathrm{H} = 1-(N-1)a.
\end{equation*}

Any value $0 \leq \rho < 1$ can be obtained by an appropriate choice of $a$, with $0 < a \leq 1/(N-1)$, specifically, $a=(1-\rho)/(N-1)$.  This proves:
 
\begin{theorem}[Result 4]
For arbitrary spatial population structure (no constraints on $e_{ij}$) the fixation probability can take any value $0 \leq \rho < 1$, and consequently, the molecular clock can take any rate $0 \leq K < Nu$.
\end{theorem}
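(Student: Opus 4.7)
The plan is to prove Result 4 by explicit construction: for each target value $\rho^\star \in [0,1)$, I would exhibit a replacement rule whose overall fixation probability equals $\rho^\star$. The guiding intuition is to engineer a population in which one site acts as the sole ``source'' of ancestry---so that a mutation arising there is certain to fix---while every other site is a pure ``sink,'' from which fixation is impossible. Under such a structure, the overall fixation probability collapses by Eq.~\eqref{eq:rhodef} to the fraction of mutations that appear at the source, which is controlled by its relative turnover rate.

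Concretely, I would take $N \geq 2$, a parameter $a$ with $0 < a \leq 1/(N-1)$, and designate one site as a hub $\mathrm{H}$ and the other $N-1$ as leaves. Set $e_{\mathrm{H}j} = a$ for every leaf $j$ (unidirectional outflow from the hub), $e_{\mathrm{H}\mathrm{H}} = 1-(N-1)a$ (a self-loop at the hub), and $e_{ij}=0$ otherwise. An elementary computation gives $d_{\mathrm{H}} = 1-(N-1)a$, $d_j = a$ for each leaf $j$, and $B = 1$, and Assumption~1 is satisfied since from the hub every site is reached in a single step with positive probability.

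The next step is to solve Eq.~\eqref{eq:rhorecur} for this structure. Because no edges point into the hub from any leaf, the recurrence at any leaf $j$ becomes $a\rho_j = 0$, forcing $\rho_j = 0$ for every leaf; the normalization $\sum_i \rho_i = 1$ then gives $\rho_{\mathrm{H}} = 1$, which one checks is consistent with the recurrence at the hub. Substituting into Eq.~\eqref{eq:rhodef},
\begin{equation*}
\rho = \frac{1}{B}\sum_{i=1}^N d_i \rho_i = d_{\mathrm{H}} = 1-(N-1)a.
\end{equation*}

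To reach an arbitrary $\rho^\star \in [0,1)$, I simply set $a = (1-\rho^\star)/(N-1) \in (0,1/(N-1)]$, and Eq.~\eqref{eq:Kdef} then realizes any $K \in [0,Nu)$. I do not expect a genuine obstacle here: the asymmetry of the graph forces the site-specific fixation probabilities to collapse to $\{0,1\}$, and everything else is a one-parameter calculation. The only subtle point is the excluded endpoint $\rho = 1$: taking $a \to 0$ disconnects the leaves from the hub's descent and violates Assumption~1, so the strict inequality in the statement is genuine rather than an artifact of the construction.
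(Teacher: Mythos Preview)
Your proposal is correct and follows essentially the same construction as the paper: a hub with self-loop weight $1-(N-1)a$ and unidirectional edges of weight $a$ to $N-1$ leaves, yielding $\rho_{\mathrm{H}}=1$, $\rho_{\mathrm{L}}=0$, and $\rho = 1-(N-1)a$, with $a=(1-\rho^\star)/(N-1)$ hitting any target in $[0,1)$. Your additional remark that $a\to 0$ violates Assumption~1, explaining why $\rho=1$ is unattainable, is a nice touch that the paper leaves implicit.
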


We observe that if $a=1/N$ then the death rate is $1/N$ at each site, and therefore $\rho= 1/N$ by Result 1.  If $a=1/(N-1)$ then there is no turnover at the hub and thus $\rho=0$.   At the other extreme, as $a$ approach zero, $\rho$  comes arbitrarily close to unity.

\subsection*{Exact expressions for $\rho$ with $N \leq 3$}

To complement the above arguments, which apply to general population size $N$, we here derive exact expressions for the overall fixation probability $\rho$ in the cases $N=2$ and $N=3$.   Our results for node-specific fixation probabilities coincide with those found previous for a different model of evolution in a subdivided population \cite{Lundy1998eco}.

\subsubsection*{Population size $N=2$}

We solve Eq.~\eqref{eq:rhorecur} to obtain expressions for $\rho_1$ and $\rho_2$:
\begin{align}
\rho_{1}&=\frac{e_{12}}{e_{12}+e_{21}},\label{eq:rho1_2node}\\
\rho_{2}&=\frac{e_{21}}{e_{12}+e_{21}}.\label{eq:rho2_2node}
\end{align}

We first consider uniform birth rates ($b_i=\sum_{j=1}^2 e_{ij} = \frac{1}{2}$ for $i=1,2$).  Substituting Eqs.~\eqref{eq:rho1_2node} and \eqref{eq:rho2_2node} into Eq.~\eqref{eq:rhodef} ($B=1$ in this case), gives an expression for the overall fixation probability:
\begin{equation*}
\rho=\frac{1}{2}-\frac{(e_{12}-e_{21})^2}{e_{12}+e_{21}}.
\end{equation*}
Thus, in accordance with Result 3, $\rho$ is less than or equal to $\frac{1}{2}$ and consequently $K \leq u$.  Equality occurs if and only if $e_{12}=e_{21}$, which corresponds to equal node-specific fixation probabilities ($\rho_1=\rho_2=1/N$ from Eqs.~\eqref{eq:rho1_2node} and \eqref{eq:rho2_2node}) and equal death rates ($d_1=d_2=1/N$).

When the condition of uniform birth rates is relaxed, we find, by substituting Eqs.~\eqref{eq:rho1_2node} and \eqref{eq:rho2_2node} into Eq.~\eqref{eq:rhodef},  the following expression for the overall fixation probability:
\begin{equation*}
\rho=\frac{1}{2}-\frac{(e_{12}-e_{21})\left(\frac{B}{2}-d_1\right)}{B(e_{12}+e_{21})}.
\end{equation*}
We discover that $\rho=\frac{1}{2}$, and consequntly $K=u$, if (i) death rates are equal, $d_1 = d_2 = \frac{B}{2}$ (Result 1), or (ii) node-specific fixation probabilities are equal, $\rho_1= \rho_2 = \frac{1}{2}$ (Result 2).
When death rates are unequal and node-specific fixation probabilities are unequal, there are cases in which  $\rho<\frac{1}{2}$ ($K<u)$ and cases in which $\rho>\frac{1}{2}$ ($K>u$).
In particular, suppose that $d_1 > \frac{B}{2}$.  Then $\rho$ is less than $\frac{1}{2}$ if  $e_{12}<e_{21}$ and greater than $\frac{1}{2}$ if  $e_{12}>e_{21}$.

\subsubsection*{Population size $N=3$}

We solve Eqs.~\eqref{eq:rhorecur} and \eqref{eq:rhosum} to obtain expressions for $\rho_1$, $\rho_2$ and  $\rho_3$:
\begin{align}
\rho_{1}&=\frac{e_{12}e_{13}+e_{12}e_{23}+e_{13}e_{32}}{e_{12}\left(e_{13}+e_{23}+e_{31}\right)+e_{13}\left(e_{21}+e_{32}\right)+\left(e_{21}+e_{31}\right)\left(e_{23}+e_{32}\right)},\label{eq:rho1_3node}\\
\rho_{2}&=\frac{e_{21}e_{13}+e_{21}e_{23}+e_{23}e_{31}}{e_{12}\left(e_{13}+e_{23}+e_{31}\right)+e_{13}\left(e_{21}+e_{32}\right)+\left(e_{21}+e_{31}\right)\left(e_{23}+e_{32}\right)},\label{eq:rho2_3node}\\
\rho_{3}&=\frac{e_{12}e_{31}+e_{21}e_{32}+e_{31}e_{32}}{e_{12}\left(e_{13}+e_{23}+e_{31}\right)+e_{13}\left(e_{21}+e_{32}\right)+\left(e_{21}+e_{31}\right)\left(e_{23}+e_{32}\right)}.\label{eq:rho3_3node}
\end{align}
Substituting Eqs.~\eqref{eq:rho1_3node}-\eqref{eq:rho3_3node} into Eq.~\eqref{eq:rhodef} gives an expression for the overall fixation probability $\rho=\frac{\text{num}}{\text{denom}}$ where
\begin{align*}
\text{num} &= d_{1}
\left(e_{12}e_{13}+e_{12}e_{23}+e_{13}e_{32}\right)
+d_{2}\left(e_{21}e_{13}+e_{21}e_{23}+e_{23}e_{31}\right)\\
& \quad +d_{3}
\left(e_{12}e_{31}+e_{21}e_{32}+e_{31}e_{32}\right)\\
\text{denom} &=B\left( e_{12}\left(e_{13}+e_{23}+e_{31}\right)
+ e_{13}\left(e_{21}+e_{32}\right) + \left(e_{21}+e_{31}\right)\left(e_{23}+e_{32}\right)\right).
\end{align*}

We factor to obtain an expression for $\text{num}(\Delta \rho)$  in terms of $\tilde{\Delta} d_i = d_i-\frac{B}{3}$  for $i\in\{1,2,3\}$,
\begin{align}
\text{num}(\Delta \rho)
&=(e_{12}\tilde{\Delta} d_1+e_{21}\tilde{\Delta} d_2)(d_3- b_3)-e_{32}\tilde{\Delta} d_1(d_1- b_1)-e_{31}\tilde{\Delta} d_2(d_2- b_2)
\label{eq:rho12NK}
\end{align}

From  Eq.~\eqref{eq:rho12NK} we see that $\rho=\frac{1}{3}$, and conseqently $K=u$, in the case of equal death rates ($d_1=d_2=d_3=\frac{B}{3}$) and in the case of equivalent birth and death rate at each site ($b_i = d_i$ for $i \in \{1, 2, 3\}$).  This conforms to Result 1 and 2.  When death rates are nonuniform and node-specific probabilities are also nonuniform, we find cases in which $K>u$ and $K<u$.  For example, if $\tilde{\Delta} d_1 = 0, \tilde{\Delta} d_2 > 0, d_2 > b_2, $ and $d_3 < b_3$ then  Eq.~\eqref{eq:rho12NK} yields $\text{num}(\Delta \rho)<0$ and consequently, $\rho < \frac{1}{3}$ and $K<u$.  On the other hand, if $\tilde{\Delta} d_1 = 0, b_2 > d_2 > \frac{B}{3}, $ and $ d_3 > b_3$ then  Eq.~\eqref{eq:rho12NK} yields $\text{num}(\Delta \rho)>0$ and therefore, $\rho > \frac{1}{3}$ and $K>u$.

\subsection*{Upstream-downstream populations}

We now turn to the upstream-downstream model introduced in the Results and in Figure \ref{fig:updown}.

\begin{theorem}
In the upstream-downstream model, $\rho>1/N$, and consequently $K>u$, if and only if $d_\uparrow > d_\downarrow$.  
\end{theorem}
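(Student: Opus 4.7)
The plan is to directly substitute the given site-specific fixation probabilities into the formula for $\rho$, clear denominators, and show that the difference $N\rho - 1$ factors as a constant multiple of $(d_\uparrow - d_\downarrow)(e_\rightarrow - e_\leftarrow)$. Since the model assumes $e_\rightarrow > e_\leftarrow$, the sign of $N\rho - 1$ then tracks the sign of $d_\uparrow - d_\downarrow$, and the equivalence $\rho > 1/N \iff d_\uparrow > d_\downarrow$ follows. Because $K = Nu\rho$, this immediately gives $K > u \iff d_\uparrow > d_\downarrow$.

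The first step is to collect the basic quantities in the upstream-downstream model. There are $N = N_\uparrow + N_\downarrow$ sites, and by summing $e_{ij}$ appropriately we get $d_\uparrow = N_\uparrow e_\uparrow + N_\downarrow e_\leftarrow$, $d_\downarrow = N_\downarrow e_\downarrow + N_\uparrow e_\rightarrow$, and $B = N_\uparrow d_\uparrow + N_\downarrow d_\downarrow$. Using Eq.~\eqref{eq:rhoupdown} together with Eq.~\eqref{eq:rhodef}, the overall fixation probability reads
\begin{equation*}
\rho = \frac{1}{B}\,\frac{N_\uparrow d_\uparrow e_\rightarrow + N_\downarrow d_\downarrow e_\leftarrow}{N_\uparrow e_\rightarrow + N_\downarrow e_\leftarrow}.
\end{equation*}

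The second step is to show $N\rho - 1$ has a clean factorization. Writing $D = N_\uparrow e_\rightarrow + N_\downarrow e_\leftarrow$, we have
\begin{equation*}
N\rho - 1 = \frac{N(N_\uparrow d_\uparrow e_\rightarrow + N_\downarrow d_\downarrow e_\leftarrow) - BD}{BD}.
\end{equation*}
Expanding both $N(N_\uparrow d_\uparrow e_\rightarrow + N_\downarrow d_\downarrow e_\leftarrow)$ using $N = N_\uparrow + N_\downarrow$ and $BD = (N_\uparrow d_\uparrow + N_\downarrow d_\downarrow)(N_\uparrow e_\rightarrow + N_\downarrow e_\leftarrow)$, the $N_\uparrow^2$ and $N_\downarrow^2$ terms cancel, leaving only cross terms. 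Collecting these should yield
\begin{equation*}
N(N_\uparrow d_\uparrow e_\rightarrow + N_\downarrow d_\downarrow e_\leftarrow) - BD = N_\uparrow N_\downarrow (d_\uparrow - d_\downarrow)(e_\rightarrow - e_\leftarrow).
\end{equation*}

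In the third step, observe that $B$, $D$, $N_\uparrow$, and $N_\downarrow$ are all strictly positive, and $e_\rightarrow - e_\leftarrow > 0$ by the standing assumption of net downstream gene flow. Hence the sign of $N\rho - 1$ is precisely the sign of $d_\uparrow - d_\downarrow$, giving both directions of the equivalence. The only real obstacle is the bookkeeping in the expansion; once carried out, the factorization is a one-line consequence and the biological interpretation — that the clock speeds up exactly when the upstream pool turns over faster than the downstream pool — drops out immediately.
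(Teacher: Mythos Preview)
Your proof is correct, and the factorization
\[
N\rho-1=\frac{N_\uparrow N_\downarrow\,(d_\uparrow-d_\downarrow)(e_\rightarrow-e_\leftarrow)}{B\,(N_\uparrow e_\rightarrow+N_\downarrow e_\leftarrow)}
\]
checks out exactly as you claim. The paper reaches the same conclusion by a slightly different, more structural route: it first centers the site-specific probabilities, writing $\rho=\tfrac{1}{N}+\tfrac{1}{B}\bigl[N_\uparrow d_\uparrow(\rho_\uparrow-\tfrac{1}{N})+N_\downarrow d_\downarrow(\rho_\downarrow-\tfrac{1}{N})\bigr]$, then uses the constraint $N_\uparrow\rho_\uparrow+N_\downarrow\rho_\downarrow=1$ to collapse this to $\rho=\tfrac{1}{N}+(d_\uparrow-d_\downarrow)\tfrac{N_\uparrow}{B}(\rho_\uparrow-\tfrac{1}{N})$, and finally argues separately that $\rho_\uparrow>\tfrac{1}{N}$ from $e_\rightarrow>e_\leftarrow$. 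Your direct expansion is more elementary and has the advantage of exhibiting both factors $(d_\uparrow-d_\downarrow)$ and $(e_\rightarrow-e_\leftarrow)$ at once, so the role of the net-downstream assumption is immediately visible; the paper's decomposition, on the other hand, isolates the $(d_\uparrow-d_\downarrow)$ factor without ever substituting the explicit formulas for $\rho_\uparrow,\rho_\downarrow$, which makes clearer that the argument would generalize to any two-class structure once one knows which class has the larger reproductive value. The two derivations are equivalent---indeed, computing $\rho_\uparrow-\tfrac{1}{N}=N_\downarrow(e_\rightarrow-e_\leftarrow)/(ND)$ in the paper's final expression recovers your identity exactly.
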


\begin{proof}
From Eq.~\eqref{eq:rhodef} we obtain the following expression for the overall fixation probability
\begin{align}
\rho &= \frac{1}{B}\left(N_{\uparrow} d_{\uparrow}\rho_{\uparrow}+ N_{\downarrow} d_{\downarrow}\rho_{\downarrow}\right)\nonumber\\
&=\frac{1}{N}+\frac{1}{B} \left[ N_{\uparrow} d_{\uparrow} \left(\rho_{\uparrow}-\frac{1}{N}\right)+ N_{\downarrow} d_{\downarrow} \left( \rho_{\downarrow}- \frac{1}{N} \right)\right]
\label{eq:rho_updown_expand}
\end{align}
Since fixation probabilities sum to one, and since the total population size is $N=N_\uparrow + N_\downarrow$, we have
\begin{equation}
\label{eq:Nrhorelation}
N_\downarrow \left( \rho_{\downarrow}- \frac{1}{N} \right) = - N_{\uparrow} \left(\rho_{\uparrow}-\frac{1}{N}\right).
\end{equation}
Substituting in Eq.~\eqref{eq:rho_updown_expand} yields
\begin{equation}
\label{eq:rho_updown_expand2}
\rho = \frac{1}{N}+(d_{\uparrow}-d_{\downarrow})\frac{N_{\uparrow}}{B}  \left(\rho_{\uparrow}-\frac{1}{N}\right).
\end{equation}

It follows from Eq.~\eqref{eq:rhoupdown} and from $e_\rightarrow>e_\leftarrow$ that $\rho_{\uparrow}>\rho_{\downarrow}$.  Moreover, Eq.~\eqref{eq:Nrhorelation} implies that $\rho_{\uparrow}-1/N$ and $\rho_{\downarrow}-1/N$ have opposite signs, and since $\rho_{\uparrow}>\rho_{\downarrow}$, it follows that $\rho_{\uparrow}-1/N$ must be positive. Thus the second term on the right-hand side of Eq.~\eqref{eq:rho_updown_expand2} has the sign of $d_{\uparrow}-d_{\downarrow}$.  We therefore conclude that $\rho>1/N$, and consequently the molecular clock is accelerated relative to the well-mixed case ($K>u$), if and only if $d_\uparrow > d_\downarrow$. \end{proof}

\section*{Acknowledgements}

The Foundational Questions in Evolutionary Biology initiative at Harvard University is sponsored by the John Templeton Foundation (RFP-12-02).

\end{document}